\newcommand{\argmax}{\operatornamewithlimits{arg\ max}}
\def\ninept{\def\baselinestretch{0.95}}
\newcommand{\bx}{{\mathbf{x}}}
\newcommand{\bu}{{\mathbf{u}}}
\newcommand{\bv}{{\mathbf{v}}}
\newcommand{\bth}{{\mathbf{\theta}}}
\newcommand{\bh}{{\mathbf{h}}}
\newcommand{\bp}{{\mathbf{p}}}
\newcommand{\mmin}{{\mathrm{min}}}
\newcommand{\mmax}{{\mathrm{max}}}
\newcommand{\bX}{{\mathbf{X}}}
\newcommand{\bB}{{\mathbf{B}}}
\newcommand{\bQ}{{\mathbf{Q}}}
\newcommand{\mP}{{\mathbf{P}}}
\newcommand{\bLambda}{{\mathbf{\Lambda}}}
\newcommand{\bb}{{\mathbf{b}}}
\newcommand{\bw}{{\mathbf{w}}}
\newcommand{\bee}{{\mathbf{e}}}
\newcommand{\cX}{{\mathcal{X}}}
\newcommand{\cS}{{\mathcal{S}}}
\newcommand{\cB}{{\mathcal{B}}}
\newcommand{\cH}{{\mathcal{H}}}
\newcommand{\cZ}{{\mathcal{Z}}}
\newcommand{\cO}{{\mathcal{O}}}
\newcommand{\cU}{{\mathcal{U}}}
\newcommand{\cM}{{\mathcal{M}}}
\newcommand{\cV}{{\mathcal{V}}}
\newcommand{\cN}{{\mathcal{N}}}
\newcommand{\cR}{{\mathcal{R}}}
\newcommand{\cP}{{\mathcal{P}}}
\newcommand{\bP}{{\mathrm{Pr}}}
\renewcommand{\vec}[1]{\mbox{\boldmath${#1}$}}
\newcommand{\limn}{\lim\limits_{n \rightarrow \infty}}
\newcommand{\vpi}{\vec{\pi}}
\newcommand{\ei}{\end{itemize}}
\newcommand{\bi}{\begin{itemize}}
\newcommand{\vb}{\mbox{$\vec{b}$}}
\newtheorem{theorem}{\bf{Theorem}}[section]
\newtheorem{remark}{\bf{Remark}}[section]
\newtheorem{lemma}{\bf{Lemma}}[section]
\newtheorem{corollary}{\bf{Corollary}}[section]
\newcommand{\defi}{\stackrel{\bigtriangleup}{=}}
    \def\squarebox#1{\hbox to #1{\hfill\vbox to #1{\vfill}}}
\newcommand{\eps}{\mbox{$\epsilon$}}
\newcommand{\beps}{\mathbf{\mbox{$\epsilon$}}}
\begin{document}
\title{Optimal Investment Under Transaction Costs} \vspace{-0.2in}
 \author{S.~Tunc, M.~A.~Donmez and
  S.~S.~Kozat, {\em Senior Member}, IEEE\thanks{S.~S.~Kozat, S.~Tunc and
    M.~A.~Donmez (\{skozat,saittunc,medonmez\}@ku.edu.tr) are
    with the EE Department at the Koc University, Sariyer, 34450,
    Istanbul, tel: +902123381864, fax: +902123381548. This work is
    supported in part by the IBM Faculty Award and the Outstanding Young
    Scientist Award Program.}}  \maketitle
\vspace{-0.2in}
\begin{abstract}
 We investigate how and when to diversify capital over assets, i.e.,
 the portfolio selection problem, from a signal processing
 perspective. To this end, we first construct portfolios that achieve
 the optimal expected growth in i.i.d. discrete-time two-asset markets
 under proportional transaction costs. We then extend our analysis to
 cover markets having more than two stocks. The market is modeled by
 a sequence of price relative vectors with arbitrary discrete
 distributions, which can also be used to approximate a wide class of
 continuous distributions. To achieve the optimal growth, we use threshold
 portfolios, where we introduce a recursive update to calculate the
 expected wealth.  We then demonstrate that under the threshold
 rebalancing framework, the achievable set of portfolios elegantly
 form an irreducible Markov chain under mild technical conditions. We
 evaluate the corresponding stationary distribution of this Markov
 chain, which provides a natural and efficient method to calculate the
 cumulative expected wealth. Subsequently, the corresponding
 parameters are optimized yielding the growth optimal portfolio under
 proportional transaction costs in i.i.d. discrete-time two-asset
 markets.  As a widely known financial problem, we next solve optimal
 portfolio selection in discrete-time markets constructed by sampling
 continuous-time Brownian markets. For the case that the underlying
 discrete distributions of the price relative vectors are unknown, we
 provide a maximum likelihood estimator that is also incorporated in
 the optimization framework in our simulations.
\end{abstract}
\vspace{-0.2in}
\section{Introduction \label{sec:introduction}}
The problem of how and when an investor should diversify capital over
various assets, whose future returns are yet to be realized, is
extensively studied in various different fields from financial
engineering \cite{markowitz1952,markowitz1991}, signal processing
\cite{bean1, bean2, akansu,flier1, flier2}, machine learning
\cite{helmb98,vovk98} to information theory
\cite{cover1996}. Naturally this is one of the most important
financial applications due to the amount of money involved. However,
the current financial crisis demonstrated that there is a significant
room for improvement in this field by sound signal processing methods
\cite{flier1,flier2}, which is the main goal of this paper. In this
paper, we investigate how and when to diversify capital over assets,
i.e., the portfolio selection problem, from a signal processing
perspective and provide portfolio selection strategies that maximize
the expected cumulative wealth in discrete-time markets under
proportional transaction costs.

In particular, we study an investment problem in markets that allow
trading at discrete periods, where the discrete period is arbitrary,
e.g., it can be seconds, minutes or days \cite{investment}.
Furthermore the market levies transaction fees for both selling and
buying an asset proportional to the volume of trading at each
transaction, which accurately models a broad range of financial
markets \cite{investment, blum1998}. In our discussions, we first
consider markets with two assets, i.e., two-asset markets. We
emphasize that the two-stock markets are extensively studied in
financial literature and are shown to accurately model a wide range of
financial applications \cite{investment} such as the well-known
``Stock and Bond Market" \cite{investment}. We then extend our
analysis to markets having more than two assets, i.e., $m$-stock
markets, where $m$ is arbitrary.  Following the extensive literature
\cite{cover1996, KoSi11, markowitz1952,markowitz1991, akansu,
  investment}, the market is modeled by a sequence of price relative
vectors, say $\{ \vec{X}(n) \}_{n \geq 1}$, $\vec{X}(n) \in
[0,\infty)^m$, where each entry of $\vec{X}(n)$, i.e., $X_i(n) \in
  [0,\infty)$, is the ratio of the closing price to the opening price
    of the $i$th stock per investment period. Hence, each entry of
    $\vec{X}(n)$ quantifies the gain (or the loss) of that asset at
    each investment period. The sequence of price relative vectors is
    assumed to have an i.i.d. ``discrete'' distribution
    \cite{markowitz1952,markowitz1991, akansu, investment}, however,
    the discrete distributions on the vector of price relatives are
    arbitrary. In this sense, the corresponding discrete distributions
    can approximate a wide class of continuous distributions on the
    price relatives that satisfy certain regularity conditions by
    appropriately increasing the size of the discrete sample space. We
    first assume that we know the discrete distributions on the price
    relative vectors and then extend our analysis to cover when the
    underlying distributions are unknown. We emphasize that the
    i.i.d. assumption on the sequence of price relative vectors is
    shown to hold in most realistic markets \cite{investment,
      iyengargo}. The detailed market model is provided in Section IV.
    At each investment period, the diversification of the capital over
    the assets is represented by a portfolio vector $\vec{b}(n)$,
    where for each entry $1\geq b_i(n) \geq 0$, $\sum_{i=1}^m
    b_i(n)=1$, and $b_i(n)$ is the ratio of the capital invested in
    the $i$th asset at investment period $n$. As an example if we
    invest using $\vec{b}(n)$, we earn (or loose)
    $\vec{b}^T(n)\vec{X}(n)$ at the investment period $n$ after
    $\vec{X}(n)$ is revealed. Given that we start with one dollars,
    after an investment period of $N$ days, we have the wealth growth
    $\prod_{n=1}^N \vec{b}^T(n)\vec{X}(n)$. Under this general market
    model, we provide algorithms that maximize the expected growth
    over any period $N$ by using ``threshold rebalanced portfolios''
    (TRP)s, which are shown to yield optimal growth in general
    i.i.d. discrete-time markets \cite{iyengargo}.

	\vspace{-0.09in}
Under mild assumptions on the sequence of price relatives and without
any transaction costs, Cover et. al \cite{cover1996} showed that the
portfolio that achieves the maximal growth is a constant rebalanced
portfolio (CRP) in i.i.d. discrete-time markets. A CRP is a portfolio
investment strategy where the fraction of wealth invested in each
stock is kept constant at each investment period. A problem
extensively studied in this framework is to find sequential portfolios
that asymptotically achieve the wealth of the best CRP tuned to the
underlying sequence of price relatives.  This amounts to finding a
daily trading strategy that has the ability to perform as well as the
best asset diversified, constantly rebalanced portfolio. Several
sequential algorithms are introduced that achieve the performance of
the best CRP either with different convergence rates or performance on
historical data sets \cite{helmb98,cover1996,vovk98,
  kalai00efficient}.  Even under transaction costs, sequential
algorithms are introduced that achieve the performance of the best CRP
\cite{blum1998}. Nevertheless, we emphasize that keeping a CRP may
require extensive trading due to possible rebalancing at each
investment period deeming CRPs, or even the best CRP, ineffective in
realistic markets even under mild transaction costs \cite{KoSi11}.

In continuous-time markets, however, it has been shown that under
transaction costs, the optimal portfolios that achieve the maximal
wealth are certain class of ``no-trade zone'' portfolios
\cite{davis1990,taksar,BrClVa05}. In simple terms, a no-trade zone
portfolio has a compact closed set such that the rebalancing occurs if
the current portfolio breaches this set, otherwise no rebalancing
occurs. Clearly, such a no-trade zone portfolio may avoid hefty
transaction costs since it can limit excessive rebalancing by defining
appropriate no-trade zones. Analogous to continuous time markets, it
has been shown in \cite{iyengargo} that in two-asset i.i.d. markets
under proportional transaction costs, compact no-trade zone portfolios
are optimal such that they achieve the maximal growth under mild
assumptions on the sequence of price relatives.  In two-asset markets,
the compact no trade zone is represented by thresholds, e.g., if at
investment period $n$, the portfolio is given by $\vb(n) = [b(n)\;\;
  (1-b(n))]^T$, where $1 \geq b(n) \geq 0$, then rebalancing occurs if
$b(n) \notin (\alpha,\beta)$, given the thresholds $\alpha$, $\beta$,
where $1\geq \beta \geq \alpha \geq 0$. Similarly, the interval
$(\alpha,\beta)$ can be represented using a target portfolio $b$ and a
region around it, i.e., $(b-\epsilon,b+\epsilon)$, where $\min
\{b,1-b\} \geq \epsilon \geq 0$ such that $\alpha = b-\epsilon$ and
$\beta = b+\epsilon$. Extension of TRPs to markets having more than
two stocks is straightforward and explained in Section~\ref{subsec:m-asset}.
%To demonstrate the effectiveness
%of TRPs compared to CRPs, in Fig.~\ref{fig:intro}, we plot the average
%achieved wealth by several different TRPs and CRPs where the average
%is over the pairs of stocks from the set of 35 distinct stocks of the
%New York Stock exchange collected over 22 years \cite{KoSi11}. In
%Fig.~\ref{fig:intro}, the $x$-axis corresponds to the portion of the
%first stock in the portfolio, i.e., $b$, and the $y$-axis corresponds
%to $\epsilon$.  The transaction costs are represented by a fixed
%symmetric percentage commission $c$, where $cS$ dollars are paid to
%buy or sell $S$ dollars of stock. We observe from Fig.~\ref{fig:intro}
%that TRPs can significantly improve the achieved wealth over the CRPs
%under transaction costs provided that appropriate $b$ and $\epsilon$
%are selected.
%
%\begin{figure}
%  \centering
%  \subfloat{\label{fig:intro1}\includegraphics[height=100mm,width=85mm]{figure_c_001.eps}}
%  \subfloat{\label{fig:intro2}\includegraphics[height=100mm,width=85mm]{figure_c_006.eps}}
%  \caption{Wealth achieved by TRPs under different transaction costs. The x-axis represents the threshold $\epsilon$ and the y-axis represents the initial portfolio $b$. (a) TRP with the cost ratio $c=0.01$. (b) TRP with the cost ratio $c=0.06$. }
%    \label{fig:intro}
%\end{figure}

However, how to construct the no-trade zone portfolio, i.e., selecting
the thresholds that achieve the maximal growth, has not yet been
solved except in elementary scenarios \cite{iyengargo}. We emphasize
that a sequential universal algorithm that asymptotically achieves the
performance of the best TRP specifically tuned to the underlying
sequence of price relatives is introduced in \cite{iyengar2005}. This
algorithm leverages Bayesian type weighting from \cite{cover1996}
inspired from universal source coding and requires no statistical
assumptions on the sequence of price relatives. In similar lines,
various different universal sequential algorithms are introduced that
achieve the performance of the best algorithm in different competition
classes in \cite{port, KoSi11, kozat_tree, bean1, bean2, KoSi08}. However, we emphasize that the performance
guarantees in \cite{iyengar2005} (and in \cite{port, KoSi11, bean1,
  bean2}) on the performance, although without
any stochastic assumptions, is given for the worst case sequence and
only optimal in the asymptotic. For any finite investment period, the
corresponding order terms in the upper bounds may not be negligible in
financial markets, although they may be neglected in source coding
applications (where these algorithms are inspired from). We
demonstrate that our algorithm readily outperforms these universal
algorithms over historical data \cite{cover1996}, where similar
observations are reported in \cite{borodin, kozat_tree}.

Our main contributions are as follows.  We first consider two-asset
markets and recursively evaluate the expected achieved wealth of a
threshold portfolio for any $b$ and $\epsilon$ over any investment
period. We then extend this analysis to markets having more than
two-stocks. We next demonstrate that under the threshold rebalancing
framework, the achievable set of portfolios form an irreducible Markov
chain under mild technical conditions. We evaluate the corresponding
stationary distribution of this Markov chain, which provides a natural
and efficient method to calculate the cumulative expected
wealth. Subsequently, the corresponding parameters are optimized using
a brute force approach yielding {\em the growth optimal investment
  portfolio under proportional transaction costs in
  i.i.d. discrete-time} two-asset markets. We note that for the case
with the irreducible Markov chain, which covers practically all
scenarios in the realistic markets, the optimization of the parameters
is offline and carried out only once. However, for the case with
recursive calculations, the algorithm has an exponential computational
complexity in terms of the number of states. However, in our
simulations, we observe that a reduced complexity form of the
recursive algorithm that keeps only a constant number of states by
appropriately pruning certain states provides nearly identical results
with the ``optimal'' algorithm. Furthermore, as a well studied
problem, we also solve optimal portfolio selection in discrete-time
markets constructed by sampling continuous-time Brownian markets
\cite{investment}. When the underlying discrete distributions of the
price relative vectors are unknown, we provide a maximum likelihood
estimator to estimate the corresponding distributions that is
incorporated in the optimization framework in the Simulations
section. For all these approaches, we also provide the corresponding
complexity bounds.

The organization of the paper is as follows. In
Section~\ref{sec:prob}, we briefly describe our discrete-time stock
market model with discrete price relatives and symmetric proportional
transaction costs. In Section~\ref{sec:TRP}, we start to investigate
TRPs, where we first introduce a recursive
update in Section~\ref{subsec:algorithm} for a market having
two-stocks. Generalization of the iterative algorithm to the $m$-asset
market case is provided in Section~\ref{subsec:m-asset}.  We then show
that the TRP framework can be analyzed using finite state Markov
chains in Section~\ref{subsec:finite} and
Section~\ref{subsec:markov}. The special Brownian market is analyzed
in Section~\ref{subsec:brownian}. The maximum likelihood estimator is
derived in Section~\ref{sec:ML}. We simulate the performance of our
algorithms in Section~\ref{sec:sim} and the paper concludes with
certain remarks in Section~\ref{sec:con}.\vspace{-0.1in}

\section{Problem Description}\label{sec:prob}
We consider discrete-time stock markets under transaction costs. We
first consider a market with two stocks and then extend the analysis
to markets having more than two stocks.  We model the market using a
sequence of price relative vectors $\bX(n)$. A vector of price
relatives $\bX(n)=\left[X_1(n),\ldots,X_m(n)\right]^T$ in a market
with $m$ assets represents the change in the prices of the assets over
investment period $n$, i.e., for the $i$th stock $X_i(n)$ is the
ratio of the closing to the opening price of the $i$th stock over
period $n$. For a market having two assets, we have
$\bX(n)=\left[X_1(n)\, X_2(n)\right]^T$. We assume that the price
relative sequences $X_1(n)$ and $X_2(n)$ are independent and
identically distributed (i.i.d.) over with possibly different discrete
sample spaces $\cX_1$ and $\cX_2$, i.e., $X_1(n) \in \cX_1$ and
$X_2(n) \in \cX_2$, respectively \cite{iyengargo}. For technical
reasons, in our derivations, we assume that the sample space is $\cX
\defi \cX_1 \cup \cX_2= \{ x_1, x_2, \ldots, x_K \}$ for both $X_1(n)$
and $X_2(n)$ where $\vert\cX\vert = K$ is the cardinality of the set
$\cX$. The probability mass function (pmf) of $X_1(n)$ is $p_1(x)
\defi \bP(X_1=x)$ and the probability mass function of $X_2(n)$ is
$p_2(x) \defi \bP(X_2=x)$. We define $p_{i,1}=p_1(x_i)$ and
$p_{i,2}=p_2(x_i)$ for $x_i\in\cX$ and the probability mass vectors
$\bp_1 = \left[p_{1,1}\; p_{2,1}\; \ldots\; p_{K,1}\right]^T$ and
$\bp_2 = \left[p_{1,2}\; p_{2,2}\; \ldots\; p_{K,2}\right]^T$,
respectively. Here, we first assume that the corresponding probability
mass vectors $\bp_1$ and $\bp_2$ are known. We then extend our
analysis where $\bp_1$ and $\bp_2$ are unknown and sequentially
estimated using a maximum likelihood estimator in
Section~\ref{sec:ML}.

An allocation of wealth over two stocks is represented by the
portfolio vector $\bb(n) = \left[b(n)\;1-b(n)\right]$, where $b(n)$ and
$1-b(n)$ represents the proportion of wealth invested in the first and
second stocks, respectively, for each investment period $n$. In two
stock markets, the portfolio vector $\bb = \left[b\; 1 - b\right]$ is
completely characterized by the proportion $b$ of the total wealth
invested in the first stock. For notational clarity, we use $b(n)$ to
represent $b_1(n)$ throughout the paper.

We denote a threshold rebalancing portfolio with an initial and target
portfolio $b$ and a threshold $\eps$ by TRP($b$,$\eps$). At each
market period $n$, an investor rebalances the asset allocation only if
the portfolio leaves the interval $(b-\eps,b+\eps)$.  When
$b(n)\not\in(b-\eps,b+\eps)$, the investor buys and sells stocks so
that the asset allocation is rebalanced to the initial allocation,
i.e., $b(n)=b$, and he/she has to pay transaction fees. We emphasize
that the rebalancing can be made directly to the closest boundary
instead of to $b$ as suggested in \cite{iyengargo}, however, we
rebalance to $b$ for notational simplicity and our derivations hold
for that case also. We model transaction cost paid when rebalancing
the asset allocation by a fixed proportional cost $c\in(0,1)$
\cite{blum1998,iyengargo,KoSi11}. For instance, if the investor buys
or sells $S$ dollars of stocks, then he/she pays $cS$ dollars of
transaction fees. Although we assume a symmetric transaction cost
ratio, all the results can be carried over to markets with asymmetric
costs \cite{KoSi11,iyengargo}. Let $S(N)$ denote the achieved wealth
at investment period $N$ and assume, without loss of generality, that
the initial wealth of the investor is 1 dollars. For example, if the
portfolio $b(n)$ does not leave the interval $(b-\eps,b+\eps)$ and the
allocation of wealth is not rebalanced for $N$ investment periods,
then the current proportion of wealth invested in the first stock is
given by $ b(N) = b\prod_{n=1}^N X_1(n)/ \left( b\prod_{n=1}^N X_1(n)+
(1-b)\prod_{n=1}^N X_2(n)\right)$ and achieved wealth is given by $
S(N) = b\prod_{n=1}^N X_1(n) + (1-b)\prod_{n=1}^N X_2(n).$ If the
portfolio leaves the interval $(b-\eps,b+\eps)$ at period $N$, i.e.,
$b(N)\not \in (b-\eps,b+\eps)$, then the investor rebalances the asset
distribution to the initial distribution and pays approximately $S(N)
\vert b(N)-b\vert c$ dollars for transaction costs \cite{blum1998}. In
the next section, we first evaluate the expected achieved wealth
$E[S(N)]$ so that we can optimize $b$ and $\eps$.\vspace{-0.05in}

\section{Threshold Rebalanced Portfolios \label{sec:TRP}}
In this section, we first investigate TRPs
in discrete-time two-asset markets under proportional transaction
costs. We initially calculate the expected achieved wealth at a given
investment period by an iterative algorithm. Then, we present an upper
bound on the complexity of the algorithm. We also calculate the
expected achieved wealth of markets having more than two assets, i.e.,
$m$-asset markets for an arbitrary $m$. We then provide the necessary
and sufficient conditions such that the achievable portfolios are
finite such that the complexity of the algorithm does not grow at any
period. We also show that the portfolio sequence converges to a
stationary distribution and derive the expected achieved wealth. Based
on the calculation of the expected achieved wealth, we optimize $b$
and $\eps$ using a brute-force search. Finally, with these
derivations, we consider the well-known discrete-time two-asset
Brownian market with proportional transaction costs and investigate
the asymptotic expected achieved wealth.\vspace{-0.1in}
\subsection{An Iterative Algorithm \label{subsec:algorithm}}\vspace{-0.05in}
In this section, we calculate the expected wealth growth of a TRP with
an iterative algorithm and find an upper bound on the complexity of
the algorithm. To accomplish this, we first define the set of
achievable portfolios at each investment period since the iterative
calculation of the expected achieved wealth is based on the achievable
portfolio set. We next introduce the portfolio transition sets and
the transition probabilities of achievable portfolios at successive
investment periods in order to find the probability of each portfolio
state iteratively and to calculate $E[S(N)]$. 

\begin{figure}
\centering
\centerline{\epsfxsize=9cm \epsfbox{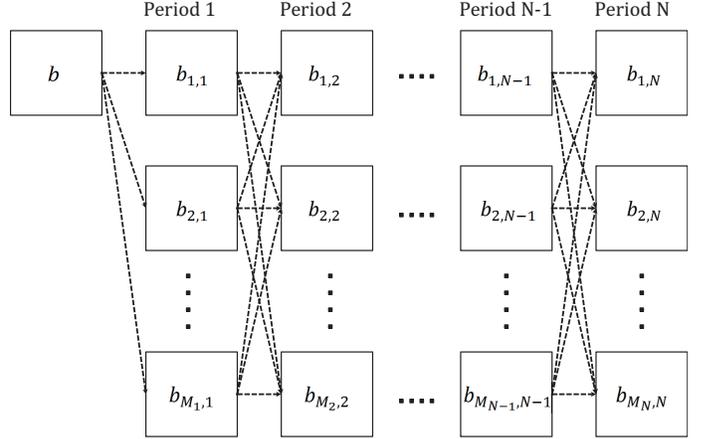}}
\caption{ Block diagram representation of $N$ period investment.}
\label{blockdiagram}
\end{figure}

We define the set of achievable portfolios at each investment period
as follows. Since the sample space of the price relative sequences
$X_1(n)$ and $X_2(n)$ is finite, i.e., $\vert \cX \vert =K$, the set
of achievable portfolios at period $N$ can only have finitely many
elements. We define the set of achievable portfolios at period $N$ as
$\cB_N = \{b_{1,N},\ldots,b_{M_N,N}\}$, where $M_N\defi\vert B_N
\vert$ is the size of the set $\cB_N$ for $N\geq1$. As illustrated in Fig.~\ref{blockdiagram}, for each achievable portfolio $b_{l,N}\in\cB_N$, there is a certain set of portfolios in $\cB_{N-1}$ that are connected to $b_{l,n}$, by definition of $b_{l,n}$. At a given investment period $N$, the set of achievable portfolios $\cB_N$ is given by \vspace{-0.01in}
\begin{align*}
\cB_{N}= & \left\{ b_{1,N},\ldots,b_{M_N,N}\; {\Big \vert}\; b_{l,N}=\frac{b_{k,N-1} u}{b_{k,N-1}u+(1-b_{k,N-1})v} \right. \\ 
& \left. \vphantom{\frac{b_{k,N-1} u}{b_{k,N-1}u+(1-b_{k,N-1})v}} \in(b-\eps,b+\eps)\;\mathrm{or}\;b_{l,N}=b ,\; u,v\in\cX\right\}.
\end{align*}
We let, without loss of generality, $b_{1,N}=b$ for each $N\in\mathbb{N}$. Note that in Fig.~\ref{blockdiagram}, the size of the set of achievable portfolios at each period may grow in the next period depending on the set of price relative vectors. We next define the transition probabilities as $q_{k,l,N}=\bP\left(b(N)=b_{l,N}\vert b(N-1)=b_{k,N-1}\right)$ for $k=1,\ldots,M_{N-1}$ and $l=1,\ldots,M_N$ and the set of achievable portfolios that are connected to $b_{l,N}$, i.e., the portfolio transition set, as $\cN_{l,N}=\{b_{k,N-1} \in \cB_{N-1}\; \vert\; q_{k,l,N}>0,\;k=1,\ldots,M_{N-1}\}$ for $l=1,\ldots,M_N$. Hence, the probability of each portfolio state is given by
\begin{align}
&\bP\left(b(N)=b_{l,N}\right) = \sum_{b_{k,N-1}\in \cB_{N-1}} \bP\left(b(N)=b_{l,N}\vert b(N-1) = \right.\nonumber\\ & \left. b_{k,N-1}\right) \bP\left(b(N-1)=b_{k,N-1}\right) \nonumber \\
& = \sum_{b_{k,N-1}\in \cN_{l,N}} q_{k,l,N} \bP\left(b(N-1)=b_{k,N-1}\right)
\label{eq:eqn1}
\end{align}
for $l=1,\ldots,M_N$. Therefore, we can calculate the probability of achievable portfolios iteratively. Using these iterative equations, we next iteratively calculate the expected achieved wealth $E[S(N)]$ at each period as follows.

By definition of $\cB_{N}$ and using the law of total expectation \cite{woods}, the expected achieved wealth at investment period $N$ can be written as
\begin{equation}
E[S(N)] = \sum_{l=1}^{M_N} \bP\left(b(N)=b_{l,N}\right) E\left[S(N) | b(N)=b_{l,N} \right].
\label{eq:eqn2}
\end{equation} To get $E[S(N)]$ in \eqref{eq:eqn2} iteratively, we evaluate $\bP\left(b(N)=b_{l,N}\right)E\left[S(N) | b(N)=b_{l,N} \right]$ for each $l=1,\ldots,M_N$ from $\bP\left(b(N-1)=b_{k,N-1}\right)E[S(N-1)\vert b(N-1)=b_{k,N-1}]$ for $k=1,\ldots,M_{N-1}$. To achieve this, we first find the transition probabilities (not the state probabilities) between the achievable portfolios.

We define the set of price relative vectors that connect $b_{k,N-1}$ to $b_{l,N}$ as $\cU_{k,l,N}$ where
\begin{align*}
\cU_{k,l,N} &=  \left\{\bw = \left[w_1\;w_2\right]^T \in \cX^2 \; \vphantom{\frac{w_1 b_{k,N-1}}{w_1 b_{k,N-1} + w_2 (1-b_{k,N-1})}}\right.\\ &\left. \;\;\;\;\;{\Big \vert}\; b_{l,N}=\frac{w_1 b_{k,N-1}}{w_1 b_{k,N-1} + w_2 (1-b_{k,N-1})} \right\}\nonumber
\end{align*}for $k=1,\ldots,M_{N-1}$ and $l=2,\ldots,M_N$. We consider the price relative vectors that connect $b_{k,N-1}$ to $b_{1,N}=b$ separately since, in this case, there are two cases depending on whether the portfolio leaves the interval $(b-\eps,b+\eps)$ or not. We define $\cU_{k,1,N}$ as $\cU_{k,1,N} = \cV_{k,1,N} \cup \cR_{k,1,N}$,
where $\cV_{k,1,N}$ is the set of price relative vectors that connect $b_{k,N-1}$ to $b_{1,N}=b$ such that the portfolio does not leave the interval $(b-\eps,b+\eps)$ at period $N$, i.e.,
\begin{align*}
\cV_{k,1,N} &= \left\{\bw = \left[w_1\;w_2\right]^T \in \cX^2 \; \vphantom{\frac{w_1 b_{k,N-1}}{w_1 b_{k,N-1} + w_2 (1-b_{k,N-1})}}\right.\\ &\left. \;\;\;\;\;{\Big \vert}\; \frac{w_1 b_{k,N-1}}{w_1 b_{k,N-1} + w_2 (1-b_{k,N-1})}=b \right\}, \end{align*}
and $\cR_{k,1,N}$ is the set of price relative vectors that connect $b_{k,N-1}$ to $b_{1,N}$ such that the portfolio leaves the interval $(b-\eps,b+\eps)$ at period $N$ and is rebalanced to $b_{1,N}=b$, i.e.,
\begin{align*}
\cR_{k,1,N} = &\bigg\{\bw = \left[w_1\;w_2\right]^T \in \cX^2 \; \vert\; \frac{w_1 b_{k,N-1}}{w_1 b_{k,N-1} + w_2 (1-b_{k,N-1})} \\ &\not\in(b-\eps,b+\eps)\bigg\}.
\end{align*}Then, the transition probabilities are given by
\begin{align}
& q_{k,l,N} = \bP\left(b(N)=b_{l,N}\vert b(N-1)=b_{k,N-1}\right) \nonumber \\
              &= \bP\left(\bX(N) \in \cU_{k,l,N}\right) = \sum_{\bw=\left[w_1\;w_2\right]^T\in \cU_{k,l,N}} p_1(w_1)p_2(w_2) \label{eq:eqn3}
\end{align} for $k=1,\ldots,M_{N-1}$ and $l=1,\ldots,M_N$ so that we can calculate $\bP\left(b(N))=b_{l,N}\right)$ iteratively for each $l=1,\ldots,M_N$ by \eqref{eq:eqn1}. Since we have recursive equations for the state probabilities, we next perform the iterative calculation of the expected achieved wealth based on the achievable portfolio sets and the transition probabilities.

Given the recursive formulation for the state probabilities, we can
evaluate the term \\ $\bP\left(b(N)=b_{l,N}\right)E[S(N)\vert
  b(N)=b_{l,N}]$ for $l=1,\ldots,M_N$ from
$\bP\left(b(N-1)=b_{k,N-1}\right)E[S(N-1)\vert b(N-1)=b_{k,N-1}]$ for
$k=1,\ldots,M_{N-1}$ iteratively to calculate $E[S(N)]$ by
\eqref{eq:eqn2} as follows. To evaluate
$\bP\left(b(N)=b_{l,N}\right)E[S(N)\vert b(N)=b_{l,N}]$, we need to
consider two cases separately based on the value of $b_{l,N}$.

In the first case, we see that if the portfolio $b(N)=b_{l,N}$, where $l=2,\ldots,N$, then the portfolio does not leave the interval $(b-\eps,b+\eps)$ at period $N$. Hence, no transaction cost is paid so that we can express $\bP\left(b(N)=b_{l,N}\right)E[S(N) \vert b(N)=b_{l,N}]$ as a summation of the conditional expectations for all $b_{k,N-1}\in\cN_{l,N}$ by the law of total expectation \cite{woods} as
\begin{align}
&\bP\left(b(N)=b_{l,N}\right)E[S(N) \vert b(N)=b_{l,N}] \nonumber \\
&= \sum_{b_{k,N-1}\in \cN_{l,N}} E\left[S(N) \vert b(N)=b_{l,N},b(N-1)=b_{k,N-1}\right] \nonumber \\ 
& \times\bP\left(b(N-1)=b_{k,N-1}\vert b(N)=b_{l,N}\right)\bP\left(b(N)=b_{l,N}\right) \nonumber \\
&=\sum_{b_{k,N-1}\in \cN_{l,N}} E\left[S(N) \vert b(N)=b_{l,N},b(N-1)=b_{k,N-1}\right] \nonumber \\
& \times\bP\left(b(N-1)=b_{k,N-1}\right)q_{k,l,N} \label{eq:eqn42},
\end{align} where \eqref{eq:eqn42} follows from Bayes' theorem \cite{woods}. We note that given $b(N-1)=b_{k,N-1}$ and $b(N)=b_{l,N}$, the price relative vector $\bX(N)$ can take values from $\cU_{k,l,N}$ and $q_{k,l,N}=\bP\left(\bX(N) \in \cU_{k,l,N}\right)$ so that \eqref{eq:eqn42} can be written as a summation of the conditional expectations for all $\bX(N)=\bw \in\cU_{k,l,N}$ \cite{woods} after replacing $q_{k,l,N}$
\begin{align}
&\bP\left(b(N)=b_{l,N}\right)E[S(N) \vert b(N)=b_{l,N}] \nonumber \\
&=\sum_{b_{k,N-1}\in \cN_{l,N}}\sum_{\bw=\left[w_1\;w_2\right]^T\in \cU_{k,l,N}}E\left[S_{N}\vert b(N)=b_{l,N}, \right.\nonumber \\
& \left.b(N-1)=b_{k,N-1},\bX(N)=\bw\right]\nonumber \\
&\times \bP\left(b(N-1)=b_{k,N-1}\right)\bP\left(\bX(N)=\bw\vert \bX(N)\in\cU_{k,l,N}\right) \nonumber \\
&\times\bP\left(\bX(N) \in \cU_{k,l,N}\right)\label{eq:eqn43}.
\end{align} Now, given that $b(N-1)=b_{k,N-1}$, $b(N)=b_{l,N}$ and $\bX(N)=\bw=\left[w_1\;w_2\right]^T$, we observe that $\bP\left(\bX(N)=\bw\vert \bX(N)\in\cU_{k,l,N}\right)\bP\left(\bX(N) \in \cU_{k,l,N}\right)=\bP\left(\bX(N)=\bw\right)$ and
\begin{align}
&E\left[S_{N}\vert b(N)=b_{l,N},b(N-1)=b_{k,N-1},\bX(N)=\bw\right] \nonumber \\
&= E\left[S(N-1)(b_{k,N-1}w_1+(1-b_{k,N-1})w_2)\vert b(N-1)=\nonumber\right.\\ &\left.b_{k,N-1}\right], \label{eq:eqn45}
\end{align}
and by using \eqref{eq:eqn45} in \eqref{eq:eqn43}, we have
\begin{align*}
&\bP\left(b(N)=b_{l,N}\right)E[S(N) \vert b(N)=b_{l,N}] \nonumber \\
&=\sum_{b_{k,N-1}\in \cN_{l,N}}\sum_{\bw=\left[w_1\;w_2\right]^T\in \cU_{k,l,N}}E\left[S(N-1)(b_{k,N-1}w_1\right.\\
&\left.+(1-b_{k,N-1})w_2)\vert b(N-1)=b_{k,N-1}\right] \\
&\times\bP\left(b(N-1)=b_{k,N-1}\right)\bP\left(\bX(N)=\bw\right).
\end{align*} Therefore, we can write $\bP\left(b(N)=b_{l,N}\right)E[S(N)\vert b(N)=b_{l,N}]$ from $\bP\left(b(N-1)=b_{k,N-1}\right)E[S(N-1)\vert b(N-1)=b_{k,N-1}]$ as
\begin{align}
&\bP\left(b(N)=b_{l,N}\right)E[S(N) \vert b(N)=b_{l,N}] \nonumber \\
&=\sum_{b_{k,N-1}\in \cN_{l,N}} \bP\left(b(N-1)=b_{k,N-1}\right) \nonumber \\
&\times E\left[S(N-1)\vert b(N-1)=b_{k,N-1}\right] \nonumber \\ &\times\sum_{\bw=\left[w_1\;w_2\right]^T\in \cU_{k,l,N}}(b_{k,N-1}w_1+(1-b_{k,N-1})w_2)\nonumber\\&p_1(w_1)p_2(w_2) \label{eq:eqn44}
\end{align} for $l=2,\ldots,M_N$, where we use $\bP\left(\bX(N)=\bw\right)=p_1(w_1)p_2(w_2)$.

In the second case, if the portfolio $b(N)=b_{1,N}$, then there are two sets of price relative vectors that connect $b_{k,N-1}$ to $b_{1,N}$, i.e., $\cV_{k,1,N}$ and $\cR_{k,1,N}$. Depending on the value of the price vector, the portfolio may be rebalanced to $b_{1,N}=b$. If $\bX(N)\in\cV_{k,1,N}$, then the portfolio is not rebalanced and no transaction fee is paid. If $\bX(N)\in\cR_{k,1,N}$, then the portfolio is rebalanced and transaction cost is paid. We can find $\bP\left(b(N)=b_{1,N}\right)E[S(N) \vert b(N)=b_{1,N}]$ from $\bP\left(b(N-1)=b_{k,N-1}\right)E[S(N-1)\vert b(N-1)=b_{k,N-1}]$ as a summation of the conditional expectations for all $b_{k,N-1}\in\cN_{1,N}$ \cite{woods} as \vspace{-0.1in}
\begin{align}
&\bP\left(b(N)=b_{1,N}\right)E[S(N) \vert b(N)=b_{1,N}] \nonumber \\
&=\sum_{b_{k,N-1}\in \cN_{1,N}} E\left[S(N) \vert b(N)=b_{1,N},b(N-1)=b_{k,N-1}\right] \nonumber\\ & \times \bP\left(b(N-1)=b_{k,N-1}\right)q_{k,l,N}\label{eq:eqn51}.
\end{align} We note that given $b(N-1)=b_{k,N-1}$ and $b(N)=b_{1,N}$, the price relative vector $\bX(N)$ can take values from $\cV_{k,1,N}$ or $\cR_{k,1,N}$, $q_{k,l,N}=\bP\left(\bX(N) \in \cU_{k,l,N}\right)$ and $\bP\left(\bX(N)=\bw\vert \bX(N)\in\cU_{k,l,N}\right)\bP\left(\bX(N) \in \cU_{k,l,N}\right)=\bP\left(\bX(N)=\bw\right)$ which yields in \eqref{eq:eqn51} that \vspace{-0.1in}
\begin{align}
&\bP\left(b(N)=b_{1,N}\right)E[S(N) \vert b(N)=b_{1,N}] \nonumber \\
&=\sum_{b_{k,N-1}\in \cN_{l,N}}\left\{\sum_{\bw=\left[w_1\;w_2\right]^T\in \cV_{k,1,N}}E\left[S_{N}\vert b(N)=b_{l,N},\right.\right.\nonumber\\
&\left.\left. b(N-1)=b_{k,N-1},\bX(N)=\bw\right] \right.\nonumber \\
&\left. \times \bP\left(b(N-1)=b_{k,N-1}\right)\bP\left(\bX(N)=\bw\right) \right.\nonumber\\
&+ \sum_{\bw=\left[w_1\;w_2\right]^T\in \cR_{k,1,N}}E[S_{N}\vert b(N)=b_{l,N},b(N-1)=b_{k,N-1},\nonumber \\ & \left.\vphantom{\sum_{\bw=\left[w_1\;w_2\right]^T\in \cV_{k,1,N}}}\bX(N)=\bw]\bP\left(b(N-1)=b_{k,N-1}\right)\bP\left(\bX(N)=\bw\right)\right\}. \nonumber
\end{align}
If $\bX(N)=\bw\in\cV_{k,1,N}$, then it follows that
\begin{align}
&E\left[S_{N}\vert b(N)=b_{1,N},b(N-1)=b_{k,N-1},\bX(N)=\bw\right]  \nonumber \\
&= E\left[S(N-1)(b_{k,N-1}w_1 \right. \nonumber \\ & \left. +(1-b_{k,N-1})w_2)\vert b(N-1)=b_{k,N-1}\right]. \label{eq:eqn54}
\end{align}
If $\bX(N)=\bw\in\cR_{k,1,N}$, then transaction cost is paid which results
\begin{align}
&E\left[S_{N}\vert b(N)=b_{1,N},b(N-1)=b_{k,N-1},\bX(N)=\bw\right]  \nonumber\\
&=E\left[S(N-1)(b_{k,N-1}w_1+(1-b_{k,N-1})) \left(1-c \vphantom{\frac{b_{k,N-1}w_1}{b_{k,N-1}w_1+(1-b_{k,N-1})w_2}} \right.\right.\nonumber \\ 
& \left.\left. \times \left| \frac{b_{k,N-1}w_1}{b_{k,N-1}w_1+(1-b_{k,N-1})w_2}-b\right|\right)  {\big |} b(N-1)=b_{k,N-1}\right]. \label{eq:eqn55}
\end{align} Hence, we can write \eqref{eq:eqn51} after using \eqref{eq:eqn54} and \eqref{eq:eqn55} as
\begin{align}
&\bP\left(b(N)=b_{1,N}\right)E[S(N) \vert b(N)=b_{1,N}]\nonumber \\
&=\sum_{b_{k,N-1}\in \cN_{1,N}}\bP\left(b(N-1)=b_{k,N-1}\right)\nonumber \\
&\times\left\{ \sum_{\bw=\left[w_1\;w_2\right]^T\in \cV_{k,1,N}}\bP\left(\bX(N)=\bw\right)E\left[ S(N-1)\right.\right.\nonumber\\&\left.\left.\times(b_{k,N-1}w_1+(1-b_{k,N-1})w_2)\vert b(N-1)=b_{k,N-1}\right] \right. \nonumber\\
&+ \left. \sum_{\bw=\left[w_1\;w_2\right]^T\in \cR_{k,1,N}} \bP\left(\bX(N)=\bw\right) \right. \label{eq:eqn52}\\
&\left. \times E\left[S(N-1)(b_{k,N-1}w_1+(1-b_{k,N-1}))\right.\right.\nonumber\\&\left.\left.\left(1-c\left| \frac{b_{k,N-1}w_1}{b_{k,N-1}w_1+(1-b_{k,N-1})w_2}-b\right|\right)  \big | \right.\right.\nonumber\\&\left.\left.b(N-1)=b_{k,N-1}\right] \vphantom{\sum_{\bw=\left[w_1\;w_2\right]^T\in \cV_{k,1,N}}} \right\}. \nonumber
\end{align} Thus, we can write $\bP\left(b(N)=b_{1,N}\right)E[S(N) \vert b(N)=b_{1,N}]$ from $\bP\left(b(N-1)=b_{k,N-1}\right)E[S(N-1)\vert b(N-1)=b_{k,N-1}]$ as
\begin{align}
&\bP\left(b(N)=b_{1,N}\right)E[S(N) \vert b(N)=b_{1,N}]\nonumber \\
&=\sum_{b_{k,N-1}\in \cN_{1,N}} \bP\left(b(N-1)=b_{k,N-1}\right)E\left[S(N-1)\vert \right. \nonumber \\
&\left. b(N-1)=b_{k,N-1}\right] \left\{ \sum_{\bw=\left[w_1\;w_2\right]^T\in \cV_{k,1,N}}(b_{k,N-1}w_1 \right.\nonumber\\&\left. +(1-b_{k,N-1})w_2)p_1(w_1)p_2(w_2) \right. \label{eq:eqn53}\\  &+ \left. \vphantom{} \sum_{\bw=\left[w_1\;w_2\right]^T\in \cR_{k,1,N}}(b_{k,N-1}w_1+(1-b_{k,N-1}))\right.\nonumber\\&\left.\times\left(1-c\left| \frac{b_{k,N-1}w_1}{b_{k,N-1}w_1+(1-b_{k,N-1})w_2}-b\right|\right) \right.\nonumber\\&\left.\times p_1(w_1)p_2(w_2)\right\}, \nonumber
\end{align} which yields the recursive expressions for $\bP\left(b(N)=b_{l,N}\right)E[S(N) \vert b(N)=b_{l,N}]$ iteratively for each $l=1,\ldots,M_N$ with \eqref{eq:eqn44} and \eqref{eq:eqn53}.

Hence, we can calculate $E\left[S(N)\vert
  b(N)=b_{l,N}\right]$ $\times\bP\left(b(N)=b_{l,N}\right)$ for the case 
where the portfolio $b(N)=b_{l,N}$ for
$l=2,\ldots M_{N}$ by \eqref{eq:eqn44} and for the case where
the portfolio $b(N)=b_{1,N}=b$ by \eqref{eq:eqn53}. Therefore, we can
evaluate $E[S(N)]$ iteratively by \eqref{eq:eqn2}. Since, we have the
recursive formulation, we can optimize $b$ and $\epsilon$ by a brute
force search as shown in the Simulations section. For this recursive
evaluation, we have to find the set of achievable portfolios at each
investment period to compute $E[S(N)]$ by \eqref{eq:eqn2}. Hence, we next
analyze the number of calculations required to evaluate the expected
achieved wealth $E[S(N)]$.

\subsubsection{Complexity Analysis of the Iterative Algorithm}
We next investigate the number of achievable
portfolios at a given market period to determine the complexity of the
iterative algorithm. We show that the set of achievable portfolios at
period $N$ is equivalent to the set of achievable portfolios when the
portfolio $b(n)$ does not leave the interval $(b-\eps,b+\eps)$ for $N$
investment periods. We first demonstrate that if the portfolio never
leaves the interval $(b-\eps,b+\eps)$ for $N$ periods, then $b(N)$ is
given by $b(N)= 1/ \left( 1+\frac{1-b}{b}e^{\sum_{n=1}^{N} Z(n)} \right)$,
where $Z(n) \defi \ln\frac{X_2(n)}{X_1(n)}$ with a sample space $\cZ=\left\{z=\ln\frac{u}{v}\;\vert\;u,v\in\cX\right\}$ where $\vert \cZ \vert = M$. Then, we argue that the number of achievable portfolios at period $N$, $M_N$, is equal to the number of different values that the sum $\sum_{n=1}^{N}Z(n)$ can take when the portfolio does not leave the interval $(b-\eps,b+\eps)$ for $N$ investment periods.  We point out that $M\leq K^2-K+1$ since the price relative sequences $X_1(n)$ and $X_2(n)$ are elements of the same sample space $\cX$ with $\vert \cX \vert = K$ and by using this, we find an upper bound on the number of achievable portfolios.

\begin{lemma}
The number of achievable portfolios at period $N$, $M_N$, is equal to the number of different values that the sum $\sum_{n=1}^{N}Z(n)$ can take when the portfolio $b(n)$ does not leave the interval $(b-\eps,b+\eps)$ for $N$ investment periods and is bounded by $\binom{N+K^2-K}{N}$, i.e., $M_N = |\cB_N| \leq \binom{N+K^2-K}{N}$.
\label{lem:num_of_states}
\end{lemma}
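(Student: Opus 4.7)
The plan is to reduce the counting of achievable portfolios to counting distinct values of the log-ratio sum, and then apply a standard multiset bound on $\cZ$.

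\textbf{Step 1: Closed form for $b(n)$ without rebalancing.} I would first prove by induction on $n$ that if no rebalancing has occurred during the first $n$ periods, then
\[
b(n) \;=\; \frac{1}{1 + \tfrac{1-b}{b}\, e^{S_n}}, \qquad S_n \,\defi\, \sum_{k=1}^{n} Z(k).
\]
The base case $n=0$ is immediate from $b(0)=b$ and $S_0=0$. For the inductive step, I would substitute the hypothesis into the one-period update $b(n) = b(n-1)X_1(n)/\bigl(b(n-1)X_1(n)+(1-b(n-1))X_2(n)\bigr)$ from Section~\ref{sec:prob}, divide numerator and denominator by $b(n-1)X_1(n)$, and use $Z(n)=\ln(X_2(n)/X_1(n))$ to collapse the ratio into the stated form.

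\textbf{Step 2: Bijection between portfolios and sums.} The map $\phi(S)=1/\bigl(1+\tfrac{1-b}{b}e^{S}\bigr)$ is strictly monotone, hence injective, so distinct sums produce distinct portfolios. Every $b_{l,N}\in\cB_N$ is obtained by applying $\phi$ to some partial sum $S_j$ of length $j\in\{0,1,\ldots,N\}$, where $j$ counts the periods since the most recent rebalancing (taking $j=0$ when $b_{l,N}=b$) and the corresponding portfolio path stays in $(b-\eps,b+\eps)$ for those $j$ periods. Since $0\in\cZ$ (take $u=v$ in $z=\ln(u/v)$, which is legitimate because both $X_1(n)$ and $X_2(n)$ share the common sample space $\cX$), one may pad any length-$j$ no-leave sequence with $N-j$ copies of a zero increment without changing the sum or violating the no-leave constraint. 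Consequently the set of achievable $S_j$ over $j\le N$ equals the set of achievable $S_N$, and $M_N$ coincides with the number of distinct values that $S_N$ takes subject to the no-leave constraint, which is the claimed equality.

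\textbf{Step 3: Multiset upper bound.} Dropping the no-leave restriction only enlarges the set, so $M_N$ is upper bounded by the number of distinct sums $\sum_{k=1}^{N}Z(k)$ with $Z(k)\in\cZ$. Any such sum is determined by the multiset of its $N$ summands, so the count is at most the number of multisets of size $N$ drawn from $\cZ$, namely $\binom{N+|\cZ|-1}{N}$. A bound on $|\cZ|$ follows by noting that $\cZ=\{\ln(u/v):u,v\in\cX\}$ arises from $K^2$ pairs, of which the $K$ diagonal pairs $u=v$ all produce the single value $0$, so $|\cZ|\le K^2-K+1$. Substituting yields $M_N \le \binom{N+K^2-K}{N}$.

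\textbf{Main obstacle.} The induction and the stars-and-bars estimate are routine; the subtle point is reconciling the "mixed-length" count of Step~2 (where $j$ varies with the last rebalancing time) with the fixed length-$N$ count in the lemma's statement. The zero-padding reduction, which relies on the common-sample-space hypothesis to place $0\in\cZ$, is the crux of that reduction and is what ultimately makes the tighter bound $\binom{N+K^2-K}{N}$ (rather than the weaker sum $\sum_{j=0}^{N}\binom{j+|\cZ|-1}{j}$) available.
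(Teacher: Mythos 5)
Your proposal is correct and follows essentially the same route as the paper's proof in Appendix A: the reciprocal formula $1/b(N)=1+\frac{1-b}{b}e^{\sum_n Z(n)}$, the reduction of mixed-length (post-last-rebalance) scenarios to full-length no-leave scenarios by padding with zero increments $X_1=X_2$, and the stars-and-bars bound via $|\cZ|\le K^2-K+1$. If anything, your Step 3 is slightly more careful than the paper, which asserts the number of distinct sums \emph{equals} $\binom{N+M-1}{M-1}$ where it is really only an upper bound, as you correctly state.
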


\begin{proof}The proof is in the Appendix A. \end{proof}

\begin{remark}
Note that the complexity of calculating $E[S(N)]$ is bounded by $\cO
\left(\sum_{n=1}^{N} \binom{n+K^2-K}{n}/N\right)$ since at each period
$n=1,\ldots,N$, we calculate $E[S(n)]$ as a summation of $M_n$ terms,
i.e., $E[S(n)]=\sum_{l=1}^{M_n}E[S(n)\vert
  b(n)=b_{l,n}]\bP\left(b(n)=b_{l,n}\right)$ and $M_n \leq
\binom{n+K^2-K}{n}$.
\end{remark}

In the next section, we extend the given iterative algorithm to calculate the expected achieved wealth in a market
with $m$-assets, where $m$ is an arbitrary number determined by the investor. \vspace{-0.16in}

\subsection{Generalization of the Iterative Algorithm to the $m$-asset Market Case\label{subsec:m-asset}}
In this section, we generalize the iterative method introduced in Section~\ref{subsec:algorithm} to
a market with $m$ assets where $m \in \mathbbm{Z}^+$. We model the market as a sequence of i.i.d. price relative vectors
$\bX(n) = [X_1(n) \; X_2(n) \ldots X_m(n)]$, where $X_i(n) \in \cX$ and the p.m.f. of $X_i(n)$ is
$p_i(x) \defi \bP(X_i(n)=x)$. For $m$-asset case, the portfolio vector is given by
$\bb(n) = [b_1(n) \; b_2(n) \ldots b_m(n)]$,  $\sum_{i} b_i(n) = 1$, $b_i(n) \geq 0$, target portfolio vector is defined as $\bb=[b_1 \; b_2 \ldots b_m]$
and the threshold vector is given by $\beps = [\eps_1 \; \eps_2 \ldots \eps_m]$. Along these lines,
TRP$(\bb,\beps)$ rebalances the wealth allocation $\bb(n)$ to $\bb$ only when
$\bb(n) \notin \bb^\beps \defi [b_1-\eps_1, b_1+\eps_1]\times[b_2-\eps_2, b_2+\eps_2]\times \ldots \times[b_m-\eps_m, b_m+\eps_m]$.
In this case, if the wealth allocation is not rebalanced for $N$ investment periods, then the proportion of wealth invested in the $i$th
asset becomes
$
b_i(N) = \frac{b_i\prod_{n=1}^{N} X_i(N)}{\sum_{k=1}^m b_k\prod_{n=1}^N X_k(N)}
$
and achieved wealth is given by
$
S(N) = \sum_{k=1}^m b_k\prod_{n=1}^N X_k(N).
$
We define the set of achievable portfolios at period $N$ as
\begin{align*}
\cB_N=&\left\{\bb_{1,N}, \bb_{2,N}, \ldots, \bb_{M_N,N} \; \vert \; \bb_{k,N} = \frac{\bb_{l,N-1} \circ \bx}{\bx^T \bb_{l,N-1}} \in \bb^\beps \; \right.\\
 &\left.\mathrm{ or } \; \bb_{k,N} = \bb, \,\, \bx \in \cX^m \right\}
\end{align*}
where $M_N = |\cB_N|$. In accordance with the definitions given in two-asset market case, the definitions of the portfolio transition sets and the transition
probabilities of achievable portfolios follows. Then similar to the iterative algorithm introduced in Section~\ref{subsec:algorithm}, \eqref{eq:eqn44} and \eqref{eq:eqn53}, we can evaluate  $\bP\left(\bb(N)=\bb_{l,N}\right)E[S(N)\vert
 \bb(N)=\bb_{l,N}]$ for $l=1,\ldots,M_N$ from
$\bP\left(\bb(N-1)=\bb_{k,N-1}\right)E[S(N-1)\vert \bb(N-1)=\bb_{k,N-1}]$ for
$k=1,\ldots,M_{N-1}$ iteratively to calculate $E[S(N)]$. Therefore for the $m$-asset market case, by using
\begin{align*}
E[S(N)] = \sum_{l=1}^{M_N} \bP\left(\bb(N)=\bb_{l,N}\right) E\left[S(N) | \bb(N)=\bb_{l,N} \right],
\end{align*}
the expected achieved wealth $E[S(N)]$ can be evaluated iteratively.

In the next section, we show that the set of all achievable
portfolios, $\cB \defi \cup_{n=1}^{\infty} \cB_n$, is finite under
mild technical conditions.  \vspace{-0.16in}

\subsection{Finitely Many Achievable Portfolios}\label{subsec:finite}
In this section, we investigate the cardinality of the set of
achievable portfolios $\cB$ and demonstrate that $\cB$ is finite under
certain conditions in the following theorem,
Theorem~\ref{delta2}. This result is significant since when $\cB$ is
finite, we can derive a recursive update with a constant complexity,
i.e., the number of states does not grow, to calculate the expected
achieved wealth $E[S(n)]$ at any investment period. We demonstrate
that the portfolio sequence forms a Markov chain with a finite state
space and converges to a stationary distribution. Then, we can
investigate the limiting behavior of the expected achieved wealth
using this update to optimize $b$ and $\eps$. Before providing the
main theorem, we first state a couple of lemmas that are used in the
derivation of the main result of this section.

We first point out that in Lemma~\ref{lem:num_of_states}, we showed
that the number of achievable portfolios at period $N$ is equal to the
number of different values that the sum $\sum_{n=1}^{N}Z(n)$ can take
when the portfolio $b(n)$ does not leave the interval
$(b-\eps,b+\eps)$ for $N$ investment periods. Then, we observed that
the cardinality of the set $\cB$ is equal to the number of different
values that the sum $\sum_{n=1}^{N}Z(n)$ can take for any
$N\in\mathbb{N}$ when the portfolio $b(n)$ never leaves the interval
$(b-\eps,b+\eps)$. We next show that the portfolio $b(n)$ does not
leave the interval $(b-\eps,b+\eps)$ for $N$ periods if and only if
the sum $\sum_{n=1}^k Z(n) \in (\alpha_2,\alpha_1)$ for $k=1,\ldots,N$,
where $\alpha_1\defi\ln\frac{b(1-b+\eps)}{(1-b)(b-\eps)}>0$ and
$\alpha_2\defi\ln\frac{b(1-b-\eps)}{(1-b)(b+\eps)}<0$. Moreover, we
also prove that the number of achievable portfolios is equal to the
cardinality of the set $\cM\cap (\alpha_2,\alpha_1)$ where we define
the set $\cM$ as
\begin{align*}
	\cM = \{ &m_1 z_1 + m_2 z_2 + \ldots + m_{M^+} z_{M^+} \;\vert\; m_i \in \mathbb{Z},\;z_i\in\cZ^+\;\\
	&\mathrm{for}\; i=1,\ldots,M^+\},
\end{align*} $\cZ^+ \defi \{ z\in\cZ \;\vert\;z\geq0 \}$, $M^+\defi\vert\cZ^+\vert$. Note that $\cZ^+$ is the set of positive elements of the set $\cZ$ and any value that the sum $\sum_{n=1}^N Z(n)$ can take is an element of $\cM$. Hence, if we can demonstrate that the set $\cM\cap (\alpha_2,\alpha_1)$ is finite under certain conditions, then it yields the cardinality of the set $\cB$ since $\cB$ is finite if and only if $M\cap (\alpha_2,\alpha_1)$ is finite.

In the following lemma, we prove that the portfolio $b(n)$ does not
leave the interval $(b-\eps,b+\eps)$ for $N$ periods if and only if
the sum $\sum_{n=1}^k Z(n) \in (\alpha_2,\alpha_1)$ for
$k=1,\ldots,N$.

\begin{lemma} \label{lem:sumofZs}
The portfolio $b(n)$ does not leave the interval $(b-\eps,b+\eps)$ for $N$ investment periods if and only if the sum $\sum_{n=1}^{k}Z(n)\in(\alpha_2,\alpha_1)$ for $k=1,\ldots,N$.
\end{lemma}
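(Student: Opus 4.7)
The plan is to reduce the lemma to an algebraic equivalence between the position $b(k)$ (under the assumption of no rebalancing on periods $1,\ldots,k$) and the partial sum $S_k \defi \sum_{n=1}^k Z(n)$, and then glue these pointwise equivalences together by a short induction.

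First, I would observe that if the portfolio has not been rebalanced through periods $1,\ldots,k$, then the wealth proportions evolve multiplicatively, and a direct computation gives the closed form already noted in the paper,
\begin{equation*}
b(k) \;=\; \frac{b\prod_{n=1}^k X_1(n)}{b\prod_{n=1}^k X_1(n) + (1-b)\prod_{n=1}^k X_2(n)} \;=\; \frac{1}{1+\frac{1-b}{b}\,e^{S_k}}.
\end{equation*}
This is the key bridge between the portfolio dynamics and the random walk $\{S_k\}$. Because the map $s \mapsto 1/(1+\tfrac{1-b}{b}e^{s})$ is strictly decreasing in $s$, the interval condition on $b(k)$ translates cleanly into an interval condition on $S_k$.

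Next, I would establish the pointwise algebraic equivalence $b(k)\in(b-\eps,b+\eps) \iff S_k \in (\alpha_2,\alpha_1)$. For the upper bound, starting from $b(k) < b+\eps$ and clearing denominators yields $1 < (b+\eps)\bigl(1+\tfrac{1-b}{b}e^{S_k}\bigr)$, which rearranges to $e^{S_k} > \tfrac{b(1-b-\eps)}{(1-b)(b+\eps)}$, i.e., $S_k > \alpha_2$. Symmetrically, $b(k)>b-\eps$ is equivalent to $e^{S_k} < \tfrac{b(1-b+\eps)}{(1-b)(b-\eps)}$, i.e., $S_k < \alpha_1$. Combining the two gives the desired equivalence. (The signs $\alpha_1>0>\alpha_2$ follow from $b-\eps<b<b+\eps$ together with the monotonicity of the logarithm.)

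Finally, I would assemble the two directions by induction on $N$. For the forward direction, if $b(n)\in(b-\eps,b+\eps)$ for $n=1,\ldots,N$, then no rebalancing has occurred, so the closed form applies at every $k\le N$, and the pointwise equivalence yields $S_k\in(\alpha_2,\alpha_1)$. Conversely, suppose $S_k\in(\alpha_2,\alpha_1)$ for all $k\le N$; I would argue inductively that no rebalancing has occurred by period $k-1$ (base case $k=1$ is trivial), apply the closed form at period $k$, and invoke the equivalence to conclude $b(k)\in(b-\eps,b+\eps)$, closing the induction. No real obstacle arises here; the only subtlety worth flagging is making sure the induction hypothesis in the reverse direction is stated for the no-rebalancing event (so that the closed form is legitimately usable), rather than directly for the interval condition — once this is done carefully, the computation in the previous paragraph finishes the proof.
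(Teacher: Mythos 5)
Your proposal is correct and follows essentially the same route as the paper's Appendix B: both exploit the monotone closed form $b(k)=1/(1+\tfrac{1-b}{b}e^{S_k})$ under no rebalancing and convert the interval condition on $b(k)$ into the interval condition on $S_k$ by taking reciprocals and logarithms. Your only departure is organizational --- you close the converse with an explicit induction on the no-rebalancing event, whereas the paper argues via the first exit time --- and if anything your version makes explicit a step the paper leaves implicit.
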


\begin{proof}The proof is in the Appendix B. \end{proof}
%$\blacksquare$

In the following lemma, we demonstrate that if the condition $\vert z
\vert<\mathrm{min}\{\vert \alpha_1\vert,\vert\alpha_2\vert\}$ is
satisfied for each $z\in\cZ^+$, then for any element $m\in\cM \cap
(\alpha_2,\alpha_1)$, there exists an $N$-period market scenario where
the portfolio does not leave the interval $(b-\eps,b+\eps)$ for $N$
investment periods and $\{Z(n)=Z^{(n)}\}_{n=1}^N$ such that
$m=\sum_{n=1}^{N} Z^{(n)}$ for some $\{Z^{(n)}\}_{n=1}^N \in \cZ$ and
$N\in\mathbb{N}$. It follows that the set of different values that the
sum $\sum_{n=1}^{N} Z(n)$ can take for any $N\in\mathbb{N}$ when the
portfolio never leaves the interval $(b-\eps,b+\eps)$ for $N$
investment periods is equivalent to the set $\cM \cap
(\alpha_2,\alpha_1)$. Hence, we show that the cardinality of the set
of achievable portfolios is equal to the cardinality of the set $\cM
\cap (\alpha_2,\alpha_1)$. After this lemma, we present conditions
under which the set $\cM\cap(\alpha_2,\alpha_1)$ is finite so that the
set of achievable portfolios is also finite.

\begin{lemma} \label{lemma:M_achieve}
If $\vert z \vert<\mathrm{min}\{\vert \alpha_1\vert,\vert\alpha_2\vert\}$ for $z\in\cZ^+$, then any element of $\cM \cap (\alpha_2,\alpha_1)$ can be written as a sum $\sum_{n=1}^N Z^{(n)}$ for some $N\in\mathbb{N}$ where $\{Z(n)=Z^{(n)}\}_{n=1}^N\in\cZ$ and $\sum_{n=1}^k Z^{(n)}\in(\alpha_2,\alpha_1)$ for $k=1,\ldots,N$.
\end{lemma}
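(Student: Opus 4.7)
The plan is to exhibit a greedy ordering of the atomic summands of $m$ that keeps every partial sum inside $(\alpha_2,\alpha_1)$. First I would unpack the representation $m=\sum_{i=1}^{M^+} m_i z_i$ into a multiset of $N=\sum_{i=1}^{M^+}|m_i|$ atomic moves $y_1,\ldots,y_N\in\cZ$: each $z_i\in\cZ^+$ with $m_i>0$ contributes $m_i$ copies of $+z_i$, and with $m_i<0$ contributes $|m_i|$ copies of $-z_i$. Every such $y_j$ lies in $\cZ$ because for any $z=\ln(u/v)\in\cZ^+$ we also have $-z=\ln(v/u)\in\cZ$. Producing an $N$-period market scenario then amounts to choosing a permutation of this multiset.

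Next I would define the greedy rule: starting from $s_0=0$, at each step let $s_k$ denote the current partial sum and pick the next move according to the sign of $s_k$. If $s_k\ge 0$, prefer a negative move from the remaining multiset; if $s_k<0$, prefer a positive move; if only moves of the same sign as $s_k$ remain, take any of them.

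The heart of the argument is verifying that every partial sum stays in $(\alpha_2,\alpha_1)$. When $s_k\ge 0$ and a negative $y<0$ is available, the hypothesis $|y|<\min\{|\alpha_1|,|\alpha_2|\}$ yields $y>\alpha_2$, so $s_{k+1}=s_k+y>\alpha_2$, while $s_{k+1}\le s_k<\alpha_1$ is immediate. The case $s_k<0$ with a positive move available is symmetric. The delicate case is when no sign-reducing move remains, for example $s_k>0$ but every remaining move is non-negative; then all future partial sums are non-decreasing and each is bounded above by the total $m<\alpha_1$ and below by $s_k>0>\alpha_2$, hence all remain in $(\alpha_2,\alpha_1)$. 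The mirror-image argument disposes of $s_k<0$ with only non-positive moves remaining.

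The main obstacle I anticipate is showing that no single greedy step can overshoot from one side of $0$ past the opposite boundary; this is exactly where the single-step bound $|z|<\min\{|\alpha_1|,|\alpha_2|\}$ is essential, and once that is ruled out the monotone-tail argument closes the remaining case. Finally, identifying each chosen $Z^{(n)}=y_{\sigma(n)}\in\cZ$ with a price-relative pair $(X_1(n),X_2(n))\in\cX^2$ via $Z(n)=\ln(X_2(n)/X_1(n))$ realizes the required $N$-period market scenario, and by Lemma~\ref{lem:sumofZs} the partial-sum condition is precisely the condition that $b(n)$ never leaves $(b-\eps,b+\eps)$.
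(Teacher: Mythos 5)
Your proposal is correct and follows essentially the same route as the paper's proof: decompose $m=\sum_i m_i z_i$ into $\sum_i|m_i|$ atomic signed moves, order them greedily by choosing a move whose sign opposes the current partial sum, use the single-step bound $|z|<\min\{|\alpha_1|,|\alpha_2|\}$ to rule out overshooting past the far boundary, and close the case where only same-sign moves remain by the monotone-tail argument squeezing the partial sums between the current sum and the total $m\in(\alpha_2,\alpha_1)$. The only (welcome) refinements over the paper's version are that you state the greedy rule uniformly from the first step rather than treating $k=1,2$ separately, and that you explicitly justify $-z\in\cZ$ so each move is a legitimate price-relative realization.
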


\begin{proof}
In Lemma~\ref{lem:num_of_states}, we showed that for any investment period $N$, the number of different portfolio values that $b(N)$ can take is equal to the number of different values that the sum $\sum_{n=1}^N Z(n)$ can take where $\sum_{n=1}^k Z(n)\in(\alpha_2,\alpha_1)$ for $k=1,\ldots,N$. Since this is true for any investment period $N$, it follows that the number of all achievable portfolios is equal to the number of different values that the sum $\sum_{n=1}^{N} Z(n)$ can take for any $N\in\mathbb{N}$ such that $\sum_{n=1}^{N} Z(n)\in(\alpha_2,\alpha_1)$.

Here, we show that if $m\in\cM \cap (\alpha_2,\alpha_1) $, then there exists a sequence $\{Z^{(n)}\}_{n=1}^N \in \cZ$ for some $N\in\mathbb{N}$ such that $m=\sum_{n=1}^{N} Z^{(n)}$ and $\sum_{n=1}^{k} Z^{(n)}\in(\alpha_2,\alpha_1)$ for $k=1,\ldots,N$. Let $m\in\cM \cap (\alpha_2,\alpha_1)$. Then, it can be written as $m=m_1z_1 + \ldots + m_{M^+}z_{M^+}$ for some $m_i\in\mathbb{Z}$ and $z_i\in\cZ^+$, $i=1,\ldots,M^+$. We define $S(k) = \sum_{n=1}^{k} Z^{(n)}$ for $k\geq1$ and construct a sequence $\{Z^{(n)}\}_{n=1}^N \in \cZ$ for some $N\in\mathbb{N}$ such that $m=\sum_{n=1}^{N} Z^{(n)}$  and $S(k)\in(\alpha_2,\alpha_1)$ for each $k=1,\ldots,N$ as follows. We choose $z_i\in\cZ^+$ such that $m_i>0$, let $Z^{(1)}=z_i$ and decrease $m_i$ by 1. We see that $S(1)=Z^{(1)}\in(\alpha_2,\alpha_1)$ since $z_i<\mathrm{min}\{\vert \alpha_1\vert,\vert\alpha_2\vert\}$. Next, we choose $z_j\in\cZ^+$ such that $m_j<0$, let $Z^{(2)}=-z_j$ and increase $m_j$ by 1. Then, it follows that $S(2)=Z^{(1)}+Z^{(2)}=z_i-z_j\in (\alpha_2,\alpha_1)$ since $z_i,z_j<\mathrm{min}\{\vert \alpha_1\vert,\vert\alpha_2\vert\}$. At any time $k\geq3$, if
\begin{itemize}
\item $S(k)\geq0$, we choose $z_l\in\cZ^+$ such that $m_l<0$, let $Z^{(k+1)}=-z_l$ and increase $m_l$ by 1. Note that $S(k+1) \in(\alpha_2,\alpha_1)$ since $S(k)\in(\alpha_2,\alpha_1)$, $S(k)\geq0$ and $Z^{(k+1)}<0$. Now assume that there
exists no $z_l\in\cZ^+$ such that $m_l<0$, i.e., $m_j\geq0$ for $j=1,\ldots,M$. If we let $I\defi\{j\in\{1,\ldots,M\}\;\vert\;m_j\geq0\}=\{k_1,\ldots,k_T\}$ where $T\defi\vert I \vert$ and $Z^{(l)}=z_{k_j},\;\;\; l=k+1+\sum_{i=1}^{j-1}k_i,\ldots,k+\sum_{i=1}^{j}k_i$ for $j=1,\ldots,T$, then we get that $m=S(N)=\sum_{n=1}^{N} Z^{(n)}$ where $N=k+\sum_{i=1}^T k_i$. We observe that $S_i\in(\alpha_2,\alpha_1)$ for $i=k+1,\ldots,N$ since $m\in(\alpha_2,\alpha_1)$, $\sum_{j=1}^{T}m_{k_j}x_{k_j}\geq0$ and $S(k)>0$.

\item $S(k)<0$, we choose $z_l\in\cZ^+$ such that $m_l>0$, let $Z^{(k+1)}=z_l$ and decrease $m_l$ by 1. Note that $S(k+1) \in(\alpha_2,\alpha_1)$ since $S(k)\in(\alpha_2,\alpha_1)$, $S(k)<0$ and $Z^{(k+1)}\geq0$. Assume that there
exists no $z_l\in\cZ^+$ such that $m_l\geq0$, i.e., $m_j<0$ for $j=1,\ldots,M$. If we let $J\defi\{j\in\{1,\ldots,M\}\;\vert\;m_j\leq0\}=\{k_1,\ldots,k_W\}$ where $W\defi\vert J \vert$ and $Z^{(l)}=z_{k_j},\;\;\; l=k+1+\sum_{i=1}^{j-1}k_i,\ldots,k+\sum_{i=1}^{j}k_i$ for $j=1,\ldots,W$, then we get that $m=S(N)=\sum_{n=1}^{N} Z^{(n)}$ where $N=k+\sum_{i=1}^W k_i$. We see that $S_i\in(\alpha_2,\alpha_1)$ for $i=k+1,\ldots,N$ since $m\in(\alpha_2,\alpha_1)$, $\sum_{j=1}^{W}m_{k_j}x_{k_j}\leq0$ and $S(k)<0$.
\end{itemize}
Therefore, we can write $m=\sum_{n=1}^N Z^{(n)}$ for some $N\geq1$ where $\{Z^{(n)}\}_{n=1}^N \in \cZ$ and $\sum_{n=1}^k Z^{(n)} \in(\alpha_2,\alpha_1)$ for $k=1,\ldots,N$.
\end{proof}

Hence, we showed that if the condition $\vert z
\vert<\mathrm{min}\{\vert \alpha_1\vert,\vert\alpha_2\vert\}$ is
satisfied for each $z\in\cZ^+$, then any element of the set $\cM \cap
(\alpha_2,\alpha_1)$ can be written as a sum $\sum_{n=1}^{N} Z(n)$ for
some $N\in\mathbb{N}$ when the portfolio does not leave the interval
$(b-\eps,b+\eps)$ for $N$ investment periods. It follows that the set
of different values that the sum $\sum_{n=1}^{N} Z(n)$ can take for
any $N\in\mathbb{N}$ when the portfolio does not leave the interval
$(b-\eps,b+\eps)$ for $N$ investment periods is equivalent to the set
$\cM \cap (\alpha_2,\alpha_1)$. Thus, the number of achievable
portfolios is equal to the cardinality of the set $\cM \cap
(\alpha_2,\alpha_1)$. In the following theorem, we demonstrate that if
$\vert z \vert<\mathrm{min}\{\vert \alpha_1\vert,\vert\alpha_2\vert\}$
for $z\in\cZ^+$ and the set $\cM$ has a minimum positive element, then
$\cM \cap (\alpha_2,\alpha_1)$ is finite. Hence, the set of achievable
portfolios is also finite under these conditions. Otherwise, we show
that the set $\cM \cap (\alpha_2,\alpha_1)$ contains infinitely many
elements so that the set of achievable portfolios is also
infinite. Thus, we show that the set of achievable portfolios is
finite if and only if the minimum positive element of the set $\cM$
exists.

\begin{theorem}
If $\vert z \vert<\mathrm{min}\{\vert \alpha_1\vert,\vert\alpha_2\vert\}$ for $z\in\cZ^+$ and the set $\cM$ has a minimum positive element, i.e., if
\begin{equation*}
\delta = \min \{ m\in \cM \;|\; m > 0 \}
\label{delta2}
\end{equation*}
exists, then the set of achievable portfolio $\cB = \cup_{n=1}^{\infty} \cB_n$ is finite. If such a minimum positive element does not exist, then $\cB$ is countably infinite.
\label{thm:num_of_states}
\end{theorem}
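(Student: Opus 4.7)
The plan is to reduce the statement to a purely algebraic question about the additive subgroup of $\mathbb{R}$ generated by $\cZ^+$, and then invoke the standard dichotomy for such subgroups. First, I would combine the three preceding lemmas to establish the key identification
\[
|\cB| \;=\; |\cM \cap (\alpha_2, \alpha_1)|.
\]
The inclusion ``$\leq$'' holds unconditionally: by Lemma~\ref{lem:num_of_states}, every $b \in \cB$ corresponds to some partial sum $\sum_{n=1}^N Z(n)$, which by Lemma~\ref{lem:sumofZs} lies in $(\alpha_2,\alpha_1)$, and by construction sits in $\cM$. The reverse inclusion ``$\geq$'' is exactly the content of Lemma~\ref{lemma:M_achieve} and uses the hypothesis $|z| < \min\{|\alpha_1|,|\alpha_2|\}$ for $z \in \cZ^+$.

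Next I would observe that $\cM$ is an additive subgroup of $(\mathbb{R},+)$: sums and negatives of elements of $\cM$ stay in $\cM$ because the coefficients $m_i$ range freely over $\mathbb{Z}$. I would then cite the classical dichotomy for nontrivial additive subgroups of $\mathbb{R}$: such a subgroup is either discrete, in which case it equals $\delta\mathbb{Z}$ for its minimum positive element $\delta$, or else it is dense in $\mathbb{R}$. (If space allows, I would sketch this in one line: if no minimum positive element exists, one can find arbitrarily small positive elements of $\cM$, and finite integer combinations of a sufficiently small $\eta \in \cM \cap (0, \infty)$ approximate any real number.)

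In the finite case, since $\delta$ exists we have $\cM = \delta\mathbb{Z}$, and therefore $\cM \cap (\alpha_2,\alpha_1)$ contains at most $\lfloor (\alpha_1-\alpha_2)/\delta \rfloor + 1$ elements. The identification above then gives $|\cB| < \infty$. In the opposite case, $\cM$ is dense in $\mathbb{R}$, so $\cM \cap (\alpha_2,\alpha_1)$ is infinite (dense in a nonempty open interval), and it is countable because $\cM$ itself is countable as the $\mathbb{Z}$-span of the finite set $\cZ^+$; again by the identification, $\cB$ is countably infinite.

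The main obstacle, and the only step that is not bookkeeping from the lemmas, is the subgroup dichotomy: one must verify that $\cM$ really is a subgroup (an easy check from its definition) and then either cite or briefly justify that a subgroup of $\mathbb{R}$ with no minimum positive element must be dense. Everything else is a direct translation through the bijection $\cB \leftrightarrow \cM \cap (\alpha_2,\alpha_1)$ already supplied by Lemmas~\ref{lem:num_of_states}--\ref{lemma:M_achieve}.
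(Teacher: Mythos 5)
Your proposal is correct and follows essentially the same route as the paper: both reduce the claim to counting $\cM \cap (\alpha_2,\alpha_1)$ via Lemmas~\ref{lem:num_of_states}--\ref{lemma:M_achieve}, and your ``discrete subgroup equals $\delta\mathbb{Z}$'' step is exactly the paper's remainder argument (writing $Z = k_0\delta + \eta$ and contradicting minimality of $\delta$), while your density/arbitrarily-small-elements argument in the non-discrete case matches the paper's construction of a decreasing sequence $\{\gamma_n\} \subset \cM \cap (0,\alpha_1)$. The only difference is cosmetic: you package the case analysis as the classical dichotomy for additive subgroups of $\mathbb{R}$, whereas the paper proves the two halves by hand.
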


In Theorem~\ref{thm:num_of_states} we present a necessary and
sufficient condition for the achievable portfolios to be finite. We
emphasize that the required condition, i.e., $\vert z
\vert<\mathrm{min}\{\vert \alpha_1\vert,\vert\alpha_2\vert\}$ for
$z\in\cZ^+$, is a necessary required technical condition which assures
that the TRP thresholds are large enough to prohibit constant
rebalancings at each investment period. In this sense, this condition does not limit the generality of the TRP framework.

By Theorem~\ref{thm:num_of_states}, we establish the conditions for a
unique stationary distribution of the achievable portfolios. With the
existence of a unique stationary distribution, in the next section, we
provide the asymptotic behavior of the expected wealth growth by
presenting the growth rate.

\begin{proof}
For any investment period $N$, we showed in Lemma~\ref{lem:num_of_states} that the number of different portfolio values that $b(N)$ can take is equal to the number of different values that the sum $\sum_{n=1}^{N} Z(n)$ can take where the sum $\sum_{n=1}^{k} Z(n)\in(\alpha_2,\alpha_1)$ for $k=1,\ldots,N$. In the Lemma~\ref{lemma:M_achieve}, we showed that the set of different values that the sum $\sum_{n=1}^{N} Z(n)$ can take where the sum $\sum_{n=1}^{k} Z(n)\in(\alpha_2,\alpha_1)$ for $k=1,\ldots,N$ is equivalent to the set $\cM \cap (\alpha_2,\alpha_1)$. We let $\cH$ be the set of values that the sum $\sum_{n=1}^{N} Z(n)\in(\alpha_2,\alpha_1)$ can take for any $N\in\mathbb{N}$, i.e., $\cH=\{\sum_{n=1}^{N} Z^{(n)}\;\vert\;\{Z^{(n)}\}_{n=1}^N\in \cZ,\; \sum_{n=1}^{k} Z^{(n)}\in(\alpha_2,\alpha_1)\mathrm{ for }k=1,\ldots,N,\;N\in\mathbb{N}\}$. Now, assume that the minimum positive element $\delta$ exists. We next illustrate that the sum $\sum_{n=1}^{N} Z^{(n)}$ for any sequence $\{Z^{(n)}\}_{n=1}^N\in \cZ$ can be written as $k\delta$ for some $k\in\mathbb{Z}$, i.e., $\sum_{n=1}^{N} Z^{(n)}=k\delta$.

Assume that there exists a sequence $\{Z^{(n)}\}_{n=1}^N\in \cZ$ such that the sum $Z=\sum_{n=1}^{N} Z^{(n)} \not = k\delta$ for any $k\in\mathbb{Z}$. If we divide the real line into intervals of length $\delta$, then $Z$ should lie in one of the intervals, i.e., there exists $k_0\in\mathbb{Z}$ such that $k_0\delta<Z<(k_0+1)\delta$ so that we can write $Z=k_0\delta + \eta$ where $0<\eta<\delta$. By definition of $\cM$, an integer multiple of any element of $\cM$ is also an element of $\cM$ so that $k_0\delta\in\cM$ since $\delta\in\cM$. Moreover, for any two elements of $\cM$, their difference is also an element of $\cM$ so that $\eta=Z- k_0\delta \in \cM$ since $Z\in\cM$ and $k_0\delta\in\cM$. However, this contradicts to the fact that $\delta$ is the minimum positive element of $\cM$ since $0<\eta<\delta$ and $\eta\in\cM$. Hence, it follows that any element of $\cH$ can be written as $k\delta$ for some $k\in\mathbb{Z}$. Note that there are finitely many elements in $\cH$ since any element $h\in \cH$ can be written as $h=k\delta$ for some $k\in\mathbb{Z}$ and $\alpha_2<h<\alpha_1$. Since $\vert \cB \vert = \vert \cH \vert$, it follows that the set of achievable portfolios $\cB$ is finite.

To show that if $\delta$ does not exist then $\cB$ contains infinitely many elements, we assume that $\delta$ does not exist. Since every finite set of real numbers has a minimum, there are either countably infinitely many positive elements in the set $\cM$ or none. We know that there exists $z_i \neq 0$ so that there are positive numbers in $\cM$. Therefore, there are infinitely many elements in $\cM$. Now assume that there exists $\gamma_1>0$ that can be written as a sum $\sum_{n=1}^N Z^{(n)}$ for some $N\in\mathbb{N}$ where $\{Z^{(n)}\}_{n=1}^N\in \cZ$ and $\sum_{n=1}^k Z^{(n)}\in(\alpha_2,\alpha_1)$. Then, by Lemma~\ref{lemma:M_achieve}, it follows that $\gamma_1\in \cM \cap (0,\alpha_1)$ and since there exists no positive minimum element of $\cM$, there exists $\gamma_2>0$ such that $\gamma_2<\gamma_1$ so that $\gamma_2\in \cM \cap (0,\alpha_1)$. In this way, we can construct a decreasing sequence $\{\gamma_n\}$ such that $\gamma_n\in\cM \cap (0,\alpha_1)$ for each $n\in\mathbb{N}$. Note that for any $n\in\mathbb{N}$, $\gamma_n$ is also element of $\cH$ by Lemma~\ref{lemma:M_achieve} so that there are countably infinite elements in $\cH$. Hence, it follows that $\cB$ has countably infinitely many elements.
\end{proof}

We showed that if $\vert z \vert<\mathrm{min}\{\vert
\alpha_1\vert,\vert\alpha_2\vert\}$ for $z\in\cZ^+$ and the minimum
positive element of the set $\cM$ exists, then the set of achievable
portfolios, $\cB$, is finite. If the minimum positive element of the
set $\cM$ does not exist, then the set $\cM \cap (\alpha_2,\alpha_1)$
is countably infinite so that the number of achievable portfolios is
also countably infinite. Hence, the set of achievable portfolios is
finite if and only if the minimum positive element of the set $\cM$
exists. However, Theorem~\ref{thm:num_of_states} does not specify the
exact number of achievable portfolios. In the following corollary, we
demonstrate that the number of achievable portfolios is
$\lfloor\frac{\alpha_1 - \alpha_2}{\delta}\rfloor$ if the set of
achievable portfolios is finite.

\begin{corollary}
If $\vert z \vert<\mathrm{min}\{\vert \alpha_1\vert,\vert\alpha_2\vert\}$ for $z\in\cZ^+$ and $
\delta = \min \{ m | m > 0 \, m \in \cM \}$ exists, then the number of achievable portfolios is$\lfloor\frac{\alpha_1 - \alpha_2}{\delta}\rfloor$\footnote{Here, $\lfloor x/y\rfloor$ is the largest integer less than or equal to $x/y$}.
\label{lem:ach_of_states}
\end{corollary}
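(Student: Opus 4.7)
My plan is to combine the identification of $\cB$ with a subset of the real line established in Theorem~\ref{thm:num_of_states} with a structural description of $\cM$ as a cyclic subgroup, and then reduce to elementary interval-counting. By Lemma~\ref{lem:sumofZs}, Lemma~\ref{lemma:M_achieve}, and Theorem~\ref{thm:num_of_states}, there is a bijection between $\cB$ and $\cM\cap(\alpha_2,\alpha_1)$, so it suffices to compute $|\cM\cap(\alpha_2,\alpha_1)|$.

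Next I would pin down the set $\cM$ explicitly. The set $\cM$ is closed under addition and integer scaling, so it is a subgroup of $(\mathbb{R},+)$. The proof of Theorem~\ref{thm:num_of_states} already observed, via a ``divide the real line into intervals of length $\delta$'' argument, that under the assumption that $\delta$ exists every $m\in\cM$ is of the form $k\delta$ for some $k\in\mathbb{Z}$; and since $\delta\in\cM$ and $\cM$ is additively closed, the reverse inclusion $\delta\mathbb{Z}\subseteq\cM$ is immediate. Hence $\cM=\delta\mathbb{Z}$, and the counting problem becomes $|\delta\mathbb{Z}\cap(\alpha_2,\alpha_1)|$.

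After changing variables via $m=k\delta$ the problem is just to count integers $k$ with $\alpha_2/\delta<k<\alpha_1/\delta$. The open interval has length $(\alpha_1-\alpha_2)/\delta$, and because Theorem~\ref{thm:num_of_states} gives $\alpha_2<0<\alpha_1$, a direct count by separating the cases $k>0$, $k<0$, and $k=0$ (using $k=0\in\cM$ since $\cM$ is a group) yields the formula $\lfloor(\alpha_1-\alpha_2)/\delta\rfloor$ claimed in the corollary.

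The step I expect to need the most care is not any of the reductions but the bookkeeping in the final count, because $\lfloor x\rfloor+\lfloor y\rfloor$ and $\lfloor x+y\rfloor$ can differ by one. Verifying that for the specific quantities $\alpha_1/\delta$ and $|\alpha_2|/\delta$ arising here the floor-of-sum identity holds without the $+1$ correction, equivalently that the boundary points $\alpha_1$ and $\alpha_2$ do not themselves fall on the lattice $\delta\mathbb{Z}$, is the one place I would write out the arithmetic carefully; this non-alignment is natural because $\alpha_1,\alpha_2$ are determined by the target portfolio $b$ and threshold $\eps$ through logs involving $1\pm\eps/b$, while $\delta$ is a minimum of integer combinations of the log price relatives in $\cZ^+$, and the hypothesis $|z|<\min\{|\alpha_1|,|\alpha_2|\}$ already separates these scales.
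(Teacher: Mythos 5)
Your overall route is the same as the paper's: both arguments identify the set of achievable portfolios with $\cM\cap(\alpha_2,\alpha_1)$ via Lemmas~\ref{lem:num_of_states}--\ref{lemma:M_achieve}, observe (from the proof of Theorem~\ref{thm:num_of_states}) that every element of $\cM$ is an integer multiple of $\delta$, so that $\cM=\delta\mathbb{Z}$, and then count lattice points in the open interval $(\alpha_2,\alpha_1)$. The paper simply asserts that this count equals $\lfloor(\alpha_1-\alpha_2)/\delta\rfloor$; you correctly recognize that this final step is where the real work lies, but the check you propose to carry out is not the right one, and this leaves a genuine gap.

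Write $r=\alpha_1/\delta>0$ and $s=-\alpha_2/\delta>0$. When neither $r$ nor $s$ is an integer, the number of integers $k$ with $-s<k<r$ is $\lfloor r\rfloor+\lfloor s\rfloor+1$ (the $+1$ accounting for $k=0$), whereas $\lfloor r+s\rfloor$ equals $\lfloor r\rfloor+\lfloor s\rfloor$ or $\lfloor r\rfloor+\lfloor s\rfloor+1$ according as the fractional parts of $r$ and $s$ sum to less than $1$ or not. So the stated formula agrees with the true count precisely when those fractional parts sum to at least $1$; this is a condition on where the interval sits relative to the lattice, not on whether its endpoints lie on the lattice. Concretely, with $\delta=1$, $\alpha_1=1.2$, $\alpha_2=-1.2$, neither endpoint is a lattice point, yet $(\alpha_2,\alpha_1)$ contains the three lattice points $-1,0,1$ while $\lfloor(\alpha_1-\alpha_2)/\delta\rfloor=\lfloor 2.4\rfloor=2$. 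Hence the ``non-alignment of the boundary points'' you plan to verify would not close the argument even if established. To be fair, the paper's own proof does not address this either---it passes from ``every element of $\cM\cap(\alpha_2,\alpha_1)$ is of the form $k\delta$'' directly to the count $\lfloor(\alpha_1-\alpha_2)/\delta\rfloor$---so you have isolated a real weak point of the corollary; but your proposal as written neither completes the count nor identifies the correct condition under which the stated formula holds.
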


\begin{proof}
Assume that $\delta$ exists and there exists $\theta>0$ such that $\theta$ can be written as a sum $\sum_{n=1}^N Z^{(n)}$ for some $N\in\mathbb{N}$ and $\{Z(n)=Z^{(n)}\}_{n=1}^N\in\cZ$ such that $\sum_{n=1}^k Z^{(n)} \in (\alpha_2,\alpha_1)$ for $k=1,\ldots,N$. Note that such a $\theta$ exists, e.g., $\theta=z>0$ where $z\in\cZ^+$ since $z\in(\alpha_2,\alpha_1)$. Then, by Lemma~\ref{lemma:M_achieve}, it follows that $\theta \in \cM \cap (0,\alpha_1)$. Since $\delta$ is the minimum positive element of $\cM$, it follows that $0<\delta\leq\theta$ and $\delta\in \cM\cap(0,\alpha_1)$.
Hence, by Lemma~\ref{lemma:M_achieve}, we get that $\delta$ can be written as a sum $\sum_{n=1}^{N^{'}} Z^{(n)}$ for some $N^{'}\in\mathbb{N}$ and $\{Z^{(n)}\}_{n=1}^{N^{'}}\in\cZ$ where $\sum_{n=1}^k Z^{(n)} \in (\alpha_2,\alpha_1)$ for $k=1,\ldots,N{'}$. We note that $\delta$ is an element of the set of different values that the sum $\sum_{n=1}^N Z(n)$ can take for any $N\in\mathbb{N}$ and $Z(n)\in\cZ$ for $n=1,\ldots,N$ such that the portfolio does not leave the interval $(b-\eps,b+\eps)$. We showed in Theorem~\ref{thm:num_of_states} that any element of $\cM$ can be written as $k\delta$ for some $k\in\mathbb{Z}$ so that the number of elements in $\cM \cap (\alpha_2,\alpha_1)$ is $\lfloor\frac{\alpha_1 - \alpha_2}{\delta}\rfloor$. Hence, it follows that there are exactly $\lfloor\frac{\alpha_1 - \alpha_2}{\delta}\rfloor$ achievable portfolios since Lemma~\ref{lemma:M_achieve} implies that the set $\cM \cap (\alpha_2,\alpha_1)$ is equivalent to the set of different values that the sum $\sum_{n=1}^N Z(n)$ can take for any $N\in\mathbb{N}$ and $Z(n)\in\cZ$ for $n=1,\ldots,N$ such that the sum $\sum_{n=1}^k Z(n) \in (\alpha_2,\alpha_1)$ for each $k=1,\ldots,N$ and the cardinality of the latter set is equal to the number of achievable portfolios.
\end{proof}

In Theorem~\ref{thm:num_of_states}, we introduce conditions on the cardinality of the set of all achievable portfolio states, $\cB$, and showed that if $\vert z \vert<\mathrm{min}\{\vert \alpha_1\vert,\vert\alpha_2\vert\}$ for all $z\in\cZ^+$ and the minimum positive element of the set $\cM$ exists, then $\cB$ is finite. This result is significant when we analyze the asymptotic behavior of the expected achieved wealth, i.e., in the following, we demonstrate that when $\cB$ is finite, the portfolio sequence converges to a stationary distribution. Hence, we can determine the limiting behavior of the expected achieved wealth so that we can optimize $b$ and $\eps$. To accomplish this, specifically, we first present a recursive update to evaluate $E[S(n)]$. We then maximize $g(b,\eps) \defi \limn \frac{1}{n} \log E[S(n)]$ over $b$ and $\eps$ with a brute-force search, i.e., we calculate $g(b,\eps)$ for different $(b,\eps)$ pairs and find the one that yields the maximum. \vspace{-0.16in}
\subsection{Finite State Markov Chain for Threshold Portfolios \label{subsec:markov}}
If we assume that $\vert z \vert<\mathrm{min}\{\vert \alpha_1\vert,\vert\alpha_2\vert\}$ for all $z\in\cZ^+$ and $\delta=\mathrm{min}\{m\in\cM\;\vert\;m>0\}$ exists, then the set of all achievable portfolios $\cB$ is finite. By Corollary~\ref{lem:ach_of_states}, it follows that there are exactly $L=\lfloor\frac{\alpha_1 - \alpha_2}{\delta}\rfloor$ achievable portfolios. We let $\cB=\{b_1,\ldots,b_L\}$ and, without loss of generality, $b_1=b$. We define the probability mass vector of the portfolio sequence as $\vpi(n)=\left[\pi_1(n)\;\ldots\;\pi_L(n)\right]^T$ where $\pi_i(n)\defi \bP\left(b(n)=b_i\right)$. The portfolio sequence $b(n)$ forms a homogeneous Markov chain with a finite state space $\cB$ since the transition probabilities between states are independent of period $n$. We see that $b(n)$ is irreducible since each state communicates with other states so that all states are null-persistent since $\cB$ is finite \cite{woods}. Then, it follows that there exists a unique stationary distribution vector $\vpi$, i.e., $\vpi = \limn \vpi(n)$. To calculate $\vpi$, we first observe that the set of portfolios that are connected to $b_l$, $\cN_{l,n}$, and the set of price relative vectors that connect $b_k$ to $b_l$, $\cU_{k,l,n}$, are independent of investment period since the price relative sequences are i.i.d. for $k=1,\ldots,L$ and $l=1,\ldots,L$. Hence, we write $\cU_{k,l,n}=\cU_{k,l}$ and $\cN_{l,n}=\cN_l$ for $n\in\mathbb{N}$. We next note that the state transition probabilities are also independent of investment period and write $q_{k,l,n}=\bP\left(b(n)=b_l\vert b(n-1)=b_k\right) = q_{k,l}$ for $n\in\mathbb{N}$, $k=1,\ldots,L$ and $l=1,\ldots,L$. Therefore, we can write $\bP\left(b(n)=b_l\right)$ as
\begin{align}
\bP\left(b(n)=b_l\right)= \sum_{k=1}^L q_{k,l}\bP\left(b(n-1)=b_k\right), \label{eq:pir1}
\end{align} where $q_{k,l}=0$ if $b_k\not\in\cN_{l}$. Now, by using the definition of $\vpi(n)$ and \eqref{eq:pir1}, we get $
\vpi(n+1) = \mP \vpi(n)$ for each $n$, where $\mP$ is the state transition matrix, i.e., $\mP_{ij}=q_{i,j}$.
%, i.e.,
%\begin{align*}
%\mP = \begin{bmatrix}q_{1,1} & \cdots & q_{L,1} \\\vdots & \ddots & \vdots \\ q_{1,L} & \cdots & q_{L,L}  \end{bmatrix}.
%\end{align*}

We next determine the limiting behavior of the expected achieved wealth $E[S(n)]$ to optimize $b$ and $\eps$ as follows. In Section~\ref{subsec:algorithm}, we showed that $E[S(n)]$ can be calculated iteratively by \eqref{eq:eqn2}, \eqref{eq:eqn44} and \eqref{eq:eqn53}. If we define the vector $\bee(n)=\left[e_{1}(n)\;\ldots\;e_{L}(n)\right]^T$ where $e_i(n)\defi\bP\left(b(n)=b_i\right)E[S(n)\vert b(n)=b_i]$, then we can calculate $E[S(n)]$ as the sum of the entries of $\bee(n)$ by \eqref{eq:eqn2}, i.e.,
$
E[S(n)] = \sum_{i=1}^L e_i(n) = \mathbf{1}^T \bee(n)$, where $\mathbf{1}$ is the vector of ones. Hence, by definition of $\bee(n)$, we can write
$
\bee(n+1) = \bQ \bee(n), $
where the matrix $\bQ$ is given by
\begin{align}
&\bQ=\label{eq:Q}\\
&\left[\begin{matrix}
\sum\limits_{\bw=\left[w_1\;w_2\right]^T\in \cU_{1,1}} \kappa_1 & \cdots & \sum\limits_{\bw=\left[w_1\;w_2\right]^T\in \cU_{L,1}} \kappa_L  \\
\vdots  & \ddots & \vdots \\
\sum\limits_{\bw=\left[w_1\;w_2\right]^T\in \cU_{1,L}} \kappa_1 & \cdots & \sum\limits_{\bw=\left[w_1\;w_2\right]^T\in \cU_{L,L}} \kappa_L
\end{matrix}\right] \nonumber
\end{align}
where $\kappa_i \defi \left(b_iw_1 +(1-b_i)w_2\right)p_1(w_1)p_2(w_2)$ and we ignore rebalancing for presentation purposes. From
\eqref{eq:eqn44} and \eqref{eq:eqn53}, $\bQ$ does not depend on period
$n$ since there are finitely many portfolio states, i.e., $\bQ$ is
constant. If we take rebalancing into account, then only the first row
of the matrix $\bQ$ changes and the other rows remain the same where
\begin{align*}
\bQ_{1,j} &= \sum_{\bw=\left[w_1\;w_2\right]^T\in \cV_{j,1}} \left(b_1w_1 +(1-b_1)w_2\right)p_1(w_1)p_2(w_2) \nonumber \\
          &+ \sum_{\bw=\left[w_1\;w_2\right]^T\in \cR_{j,1}} \left(b_1w_1 +(1-b_1)w_2\right) \\
          & \times \left(1-c \left|\frac{b_1w_1}{b_1w_1+(1-b_1)w_2}-b\right|\right)p_1(w_1)p_2(w_2),
\end{align*}
$\cV_{j,1}$ is the set of price relative vectors that connect $b_j$ to $b_1=b$ without crossing the threshold boundaries and
$\cR_{j,1}$ is the set of price relative vectors that connect $b_j$ to $b_1=b$ by crossing the threshold boundaries for $i=j,\ldots,L$. Note that we can find the matrix $\bQ$ by using the set of achievable portfolios $\cB$ and the probability mass vectors $\bp_1$ and $\bp_2$ of the price relative sequences.

Here, we analyze $E[S(n)]$ as $n\rightarrow\infty$ as follows. We
assume that the matrix $\bQ$ is diagonalizable with the eigenvalues
$\lambda_1,\ldots,\lambda_L$ and, without loss of generality,
$\lambda_1\geq\ldots\geq\lambda_L$, which is the case for a wide range
of transaction costs \cite{woods}. Then, there exists a
nonsingular matrix $\bB$ such that $\bQ=\bB \bLambda \bB^{-1}$ where
$\bLambda$ is the diagonal matrix with entries
$\lambda_1,\ldots,\lambda_L$. We observe that the matrix $\bQ$ has
nonnegative entries. Therefore, it follows from Perron-Frobenius
Theorem \cite{meyer} that the matrix $\bQ$ has a unique largest
eigenvalue $\lambda_1>0$ and any other eigenvalue is strictly smaller
than $\lambda_1$ in absolute value, i.e., $\lambda_1>\vert \lambda_j
\vert$ for $j=2,\ldots,L$. Then, the recursion on $\vec{e}(n)$ yields
$ \bee(n) = \bQ^n \bee(0)$, and after some algebra, 
 the expected achieved wealth $E[S(n)]$ is given by
\begin{align*}
E[S(n)] = \mathbf{1}^T \bee(n) = \mathbf{1}^T \bB  \bLambda^n \bB^{-1} \bee(0) = \sum_{i=1}^L u_i v_i \lambda_i^n,
\end{align*}where $\bu\defi\left[u_1\;\ldots\;u_L\right]^T=\bB^{T} \mathbf{1}$ and $\bv\defi\left[v_1\;\ldots\;v_L\right]=\bB^{-1}\bee(0)$. Then, it follows that
\begin{align*}
g(b,\eps) &= \limn \frac{1}{n} \log E[S(n)] = \limn \frac{1}{n}\log \left\{\sum_{i=1}^L u_i v_i \lambda_i^n\right\} \\
                                          &= \limn \log \lambda_1 + \limn \frac{1}{n} \log \left\{\sum_{i=1}^L u_i v_i \left(\frac{\lambda_i}{\lambda_1}\right)^n \right\} \\
                                          &= \log \lambda_1
\end{align*} since $\limn \left(\frac{\lambda_i}{\lambda_1}\right)^n = 0 $ for $i=2,\ldots,L$. Hence, we can optimize $b$ and $\eps$ as
\begin{align*}
\left[b^*,\eps^*\right] = \argmax_{b\in[0,1],0<\eps} g(b,\eps)= \argmax_{b\in[0,1],0<\eps}\log \lambda_1.
\end{align*} To maximize $g(b,\eps)$, we evaluate it for different values of $(b,\eps)$ pairs and find the pair that maximizes $g(b,\eps)$, i.e., by a brute-force search in the Simulations section.

In the next section, we investigate the well-studied two-asset
Brownian market model with transaction costs.  \vspace{-0.2in}

\subsection{Two Stock Brownian Markets \label{subsec:brownian}}
In this section, we consider the well-known two-asset Brownian market,
where stock price signals are generated from a standard Brownian
motion \cite{iyengargo,davis1990,taksar}. Portfolio selection problem
in continuous time two-asset Brownian markets with proportional
transaction costs was investigated in \cite{taksar}, where the growth
optimal investment strategy is shown to be a threshold
portfolio. Here, as usually done in the financial literature
\cite{davis1990}, we first convert the continuous time Brownian market
by sampling to a discrete-time market \cite{iyengargo}. Then, we
calculate the expected achieved wealth and optimize $b$ and $\eps$ to
find the best portfolio rebalancing strategy for a discrete-time
Brownian market with transaction costs.  Note that although, the
growth optimal investment in discrete-time two-asset Brownian markets
with proportional transaction costs was investigated in
\cite{iyengargo}, the expected achieved wealth and the optimal
threshold interval $(b-\eps,b+\eps)$ has not been calculated yet.

To model the Brownian two-asset market, we use the price relative
vector $\bX = \left[X_1\;X_2\right]^T$ with $X_1 = 1$ and $X_2 =
e^{kZ}$ where $k$ is constant and $Z$ is a random variable with
$\bP\left(Z=\pm1\right)=\frac{1}{2}$. This price relative vector is
obtained by sampling the stock price processes of the continuous time
two-asset Brownian market \cite{iyengargo,taksar}. We emphasize that
this sampling results a discrete-time market  identical to the
binomial model popular in asset pricing \cite{iyengargo}. We first
present the set of achievable portfolios and the transition
probabilities between portfolio states. 

Since the price of the first stock is the same over investment
periods, the portfolio leaves the interval $(b-\eps,b+\eps)$ if either
the money in the second stock grows over a certain limit or  falls
below a certain limit. If the portfolio $b(n)$ does not leave the
interval $(b-\eps,b+\eps)$ for $N$ investment periods, then the money
in the first stock is $b$ dollars and the money in the second stock is
$(1-b)e^{ki}$ for some $-N\leq i\leq N$ so that the portfolio is
$b(N)=\frac{b}{b+(1-b)e^{ki}}$. Note that $\frac{b}{b+(1-b)e^{ki}}\in
(b-\eps,b+\eps)$ if and only if $i_{\mmin}\leq i \leq i_{\mmax}$, where
$i_{\mmin} \defi \left\lceil\frac{1}{k} \ln
\frac{b(1-b-\eps)}{(1-b)(b+\eps)}\right\rceil$\footnote{Here, $\lceil x/y \rceil$ is the largest integer greater or equal to the $x/y$.} and $i_{\mmax} \defi
\left\lfloor{\frac{1}{k} \ln
  \frac{b(1-b+\eps)}{(1-b)(b-\eps)}}\right\rfloor$. Hence, the set of
achievable portfolios is given by
\begin{align*}
\cS = &\left\{b_i = \frac{b}{b + (1-b) e^{({i+i_{\mmin}-1})k}} \; \vert \; i = 1,\ldots, \right. \\ & \left. \vphantom{\frac{b}{b + (1-b) e^{({i+i_{\mmin}-1})k}}} 
i_{\mmax} - i_{\mmin} +1 \right\}= \{b_1,\ldots,b_S\},
\end{align*} where $\vert \cS \vert =S$ and $S\defi i_{\mmax} - i_{\mmin} +1$ and $b_{1-i_{\mmin}}=b$. We see that the portfolio is rebalanced to $b_{1-i_{\mmin}}=b$ only if it is in the state $b_1$ and $ X_2= e^{-k}$ or if it is in the state $b_S$ and $X_2=e^{k}$. Therefore, the transition probabilities are given by
\begin{align*}
&\bP\left(b_i|b_j\right) \\
&= \left\{
	\begin{array}{ll}
       \frac{1}{2} \;:\; i=2,\ldots,S-1\;\mathrm{and}\;j = i\pm 1\;\;, \mathrm{or} \;\;i=1 \;\;\mathrm{and}\\ j\in\{2,1-i_{\mmin}\},\;\;\mathrm{or}\;\;i=S\;\mathrm{and}\;j\in\{S-1,1-i_{\mmin}\} \\
       0 \,\,:\,\, \mathrm{otherwise},
     \end{array}
   \right.
\end{align*} where $P(b_i|b_j)$ is the probability that the portfolio $b(n)=b_i$ given that $b(n-1)=b_j$ for any period $n$. We now calculate $E[S(n)]$ using the recursions in Section~\ref{subsec:markov} as follows. The sets of price relative vectors that connect portfolio states are given by
\begin{align*}
\cU_{i,j} = \left\{
\begin{array}{ll}
\{\left[1\;e^{k}\right]^T\}\;&:\; i=1,\ldots,S-1\;\mathrm{and}\; j=i+1,\;\;\\ &\mathrm{or}\;\; i=S\;\mathrm{and}\;j=1-i_{\mmin}\\
\{\left[1\;e^{-k}\right]^T\}\;&:\; i=2,\ldots,S-1\;\mathrm{and}\; j=i-1,\;\;\\ &\mathrm{or}\;\; i=1\;\mathrm{and}\;j=1-i_{\mmin}\\
\varnothing\;&:\;\mathrm{otherwise}.
\end{array}\right.
\end{align*} Hence, we can calculate the matrix $\bQ$ defined in \eqref{eq:Q} as
\begin{align*}
\bQ_{i,j} = \left\{
\begin{array}{ll}
\frac{1}{2}(b_{j} +(1-b_{j})e^{k})\;&:\; i=2,\ldots,S\;\\ &\mathrm{and}\; j=i-1 \\
\frac{1}{2}(b_{j} +(1-b_{j})e^{-k})\;&:\; i=1,\ldots,S-1\;\\& \mathrm{and}\; j=i+1 \\
0 \;&:\;\mathrm{otherwise},
\end{array}\right.
\end{align*} where we ignore rebalancing. If we take rebalancing into account, then
$
\bQ_{1-i_{\mmin},1}=\frac{1}{2}(b_1 + (1-b_1)e^{-k}) \left(1-c\left|\frac{b_1}{b_1+(1-b_1)e^{-k}}-b\right|\right)
$ and
$\bQ_{1-i_{\mmin},S}=\frac{1}{2}(b_S + (1-b_S)e^{k})\left(1-c\left|\frac{b_S}{b_S+(1-b_S)e^{k}}-b\right|\right).$ Then, by the recursions in  Section~\ref{subsec:markov}, $E[S(n)]$ is given by $\bQ^n\bee(0)$. Moreover, we maximize
$
g(b,\eps) = \limn \frac{1}{n} \log E[S(n)]= \log \lambda_1,
$ where $\lambda_1$ is the largest eigenvalue of the matrix $\bQ$. Here, we optimize $b$ and $\eps$ with a brute-force search, i.e., we find $\lambda_1$ for different $(b,\eps)$ pairs and find the one that achieves the maximum.
%
%For example, if we let $b=0.5$, $\eps= 0.2$, $k=\ln2$ and $c=0.05$, then it follows that $i_{\mmin}=-1$, $i_{\mmax}=1$ and the set of achievable portfolios is given by $\{1/3,1/2,2/3\}$. Moreover, the state transition matrix is given by
%\begin{align*}
%\mP = \begin{bmatrix}0&1&0\\1/2&0&1/2\\0&1&0\end{bmatrix}
%\end{align*}
%and the matrix $\bQ$ is given by
%\begin{align*}
%\bQ = \begin{bmatrix}0&0.37&0\\1.07&0&1.14\\0&0.75&0 \end{bmatrix}
%\end{align*} with a maximum eigenvalue $\lambda_1=1.19$.

 \vspace{-0.1in}

\section{Maximum Likelihood Estimators of The Probability Mass Vectors} \label{sec:ML}
In this section, we sequentially estimate the probability mass vectors
$\bp_1$ and $\bp_2$ corresponding to $X_1(n)$ and $X_2(n)$,
respectively, using a maximum likelihood estimator (MLE). In general,
these vectors may not be known or change in time, hence, could be
estimated at each investment period prior to calculation of
$E[S(n)]$. The maximum likelihood estimator for a pmf on a finite set
is well-known \cite{woods}, but we provide the corresponding
derivations here for completeness. We consider, without loss of
generality, the price relative sequence $X_1(n)$ and assume that its
realizations are given by $X_1(n)=w_n\in\cX$ for $n=1,\ldots,N$ and
estimate $\bp_1$. Similar derivations follow for the price relative
sequence $X_2(n)$ and $\bp_2$. Note that as demonstrated in the
Simulations section, the corresponding estimation can be carried out
over a finite length window to emphasize the most recent data.  We
define the realization vector $\bw=\left[w_1,\ldots,w_N\right]$ and
the probability mass function as
$p_{\bth}(x_i)=p_1(x_i\vert\bth)=\theta_{x_i}$ for $i=1,\ldots,K$ and
the parameter vector $\bth\defi
\left[\theta_{x_1},\ldots,\theta_{x_K}\right]$. Then, the MLE of the
probability mass vector $\bp_1$ is given by
$
\bth_{\mathrm{MLE}} = \argmax_{\bth:\sum_{i=1}^K \theta_{x_i}=1} p_1(\bw\vert\bth) = \argmax_{\bth:\sum_{i=1}^K \theta_{x_i}=1} \bP\left(X_1(1)=w_1,\ldots,X_1(N)=w_N\vert\bth\right)$. Since the price relative sequence $X_1(n)$ is i.i.d., it follows that
$
p_1(\bw\vert\bth) = \prod_{i=1}^N p_1(w_i\vert\bth) = \prod_{i=1}^N \theta_{w_i}= \prod_{i=1}^N \prod_{j=1}^K \theta_{x_j}^{\mathrm{I}(w_i=x_j)}, $ since $\mathrm{I}(.)$ is the indicator function. If we change the order of the product operators, then we obtain
$
p_1(\bw\vert\bth) = \prod_{i=1}^N \prod_{j=1}^K \theta_{x_j}^{\mathrm{I}(w_i=x_j)} = \prod_{j=1}^K \theta_{x_j}^{\sum_{i=1}^N\mathrm{I}(w_i=x_j)}
                = \prod_{j=1}^K \theta_{x_j}^{N_j},$ where $N_j\defi \sum_{i=1}^N\mathrm{I}(w_i=x_j)$, i.e., the number of realizations that are equal to $x_j\in\cX$ for $j=1,\ldots,K$. Note that $\sum_{j=1}^K N_j = N$. Hence, we can write 
$
\bth_{\mathrm{MLE}} = \argmax_{\bth:\sum_{i=1}^K \theta_{x_i}=1} \sum_{j=1}^K \frac{N_j}{N} \log \theta_{x_j},$ since $\log(.)$ is a monotone increasing function. If we define the vector $\bh=\left[ h_{x_1},\ldots,h_{x_K}\right]$, where $h_{x_j}\defi \frac{N_j}{N}$ for $j=1,\ldots,K$, then we see that $h_{x_j}\geq0$ for $j=1,\ldots,K$ and $\sum_{j=1}^K h_{x_j}=1$. Since $\bh$ and $\bth$ are probability vectors, i.e., their entries are nonnegative and sum to one, it follows that $ \mathrm{D}(\bh\Vert\bth) \defi \sum_{i=1}^K h_{x_j} \log\left(
\frac{h_{x_j}}{\theta_{x_j}} \right) \geq 0 $ and  $\mathrm{D}(\bh\Vert\bth)=0$ if and only if $\bth=\bh$, i.e., their relative entropy is nonnegative \cite{coverthom}. Therefore, we get that
$\sum_{j=1}^K \frac{N_j}{N} \log \theta_{x_j} = \sum_{j=1}^K h_{x_j} \log \theta_{x_j} = -\mathrm{D}(\bh\Vert\bth) + \sum_{j=1}^K h_{x_j} \log h_{x_j} \leq \sum_{j=1}^K h_{x_j} \log h_{x_j}, $
 where the equality is reached if and only if $\bth=\bh$. Hence, it follows that
$
\bth_{\mathrm{MLE}} = \argmax_{\bth:\sum_{i=1}^K \theta_{x_i}=1} \sum_{j=1}^K \frac{N_j}{N} \log \theta_{x_j}= \bh
$ so that we estimate the probability mass vector $\bp_1$ with $\bh=\left[\frac{N_1}{N},\ldots,\frac{N_K}{N}\right]$ at each investment period $N$ where $\frac{N_j}{N}$ is the proportion of realizations up to period $N$ that are equal to $x_j$ for $x_j\in\cX$.  \vspace{-0.15in}

\section{Simulations} \label{sec:sim}
In this section, we demonstrate the performance of TRPs with several
different examples. We first analyze the performance of TRPs in a
discrete-time two-asset Brownian market introduced in
Section~\ref{subsec:brownian}. As the next example, we apply TRPs to
historical data from \cite{cover1991,KoSi11} collected from the New
York Stock Exchange over a 22-year period and compare the results to
those obtained from other investment strategies
\cite{KoSi11,cover1991,iyengar2005,port}.  We show that the
performance of the TRP algorithm is significantly better than the
portfolio investment strategies from
\cite{KoSi11,cover1991,iyengar2005,port} in historical data sets as
expected from Section III.

\begin{figure}[t]
\centering
    \subfloat[]{\centerline{\epsfxsize=9cm \epsfbox{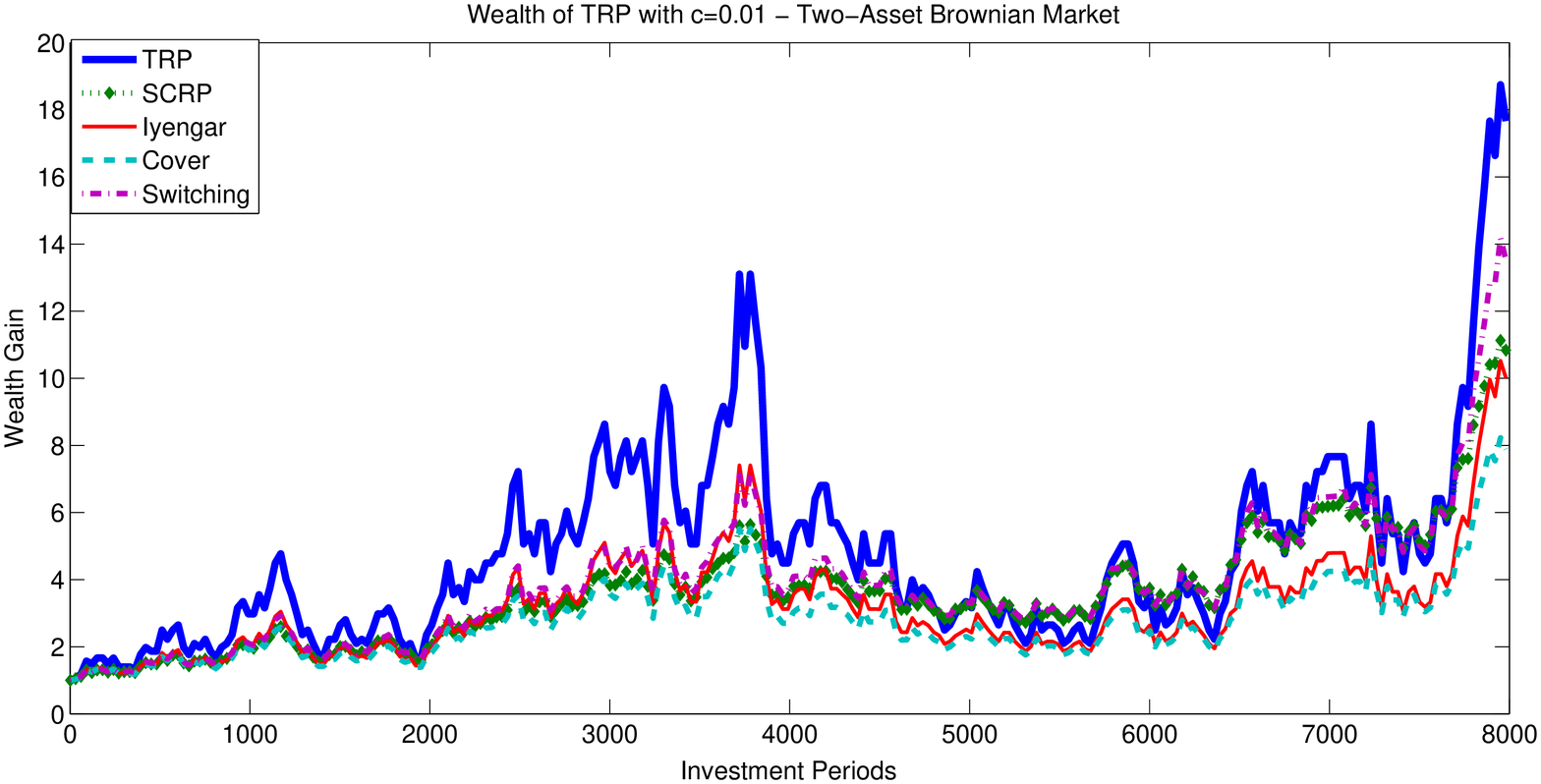}}}\\
    \subfloat[]{\centerline{\epsfxsize=9cm \epsfbox{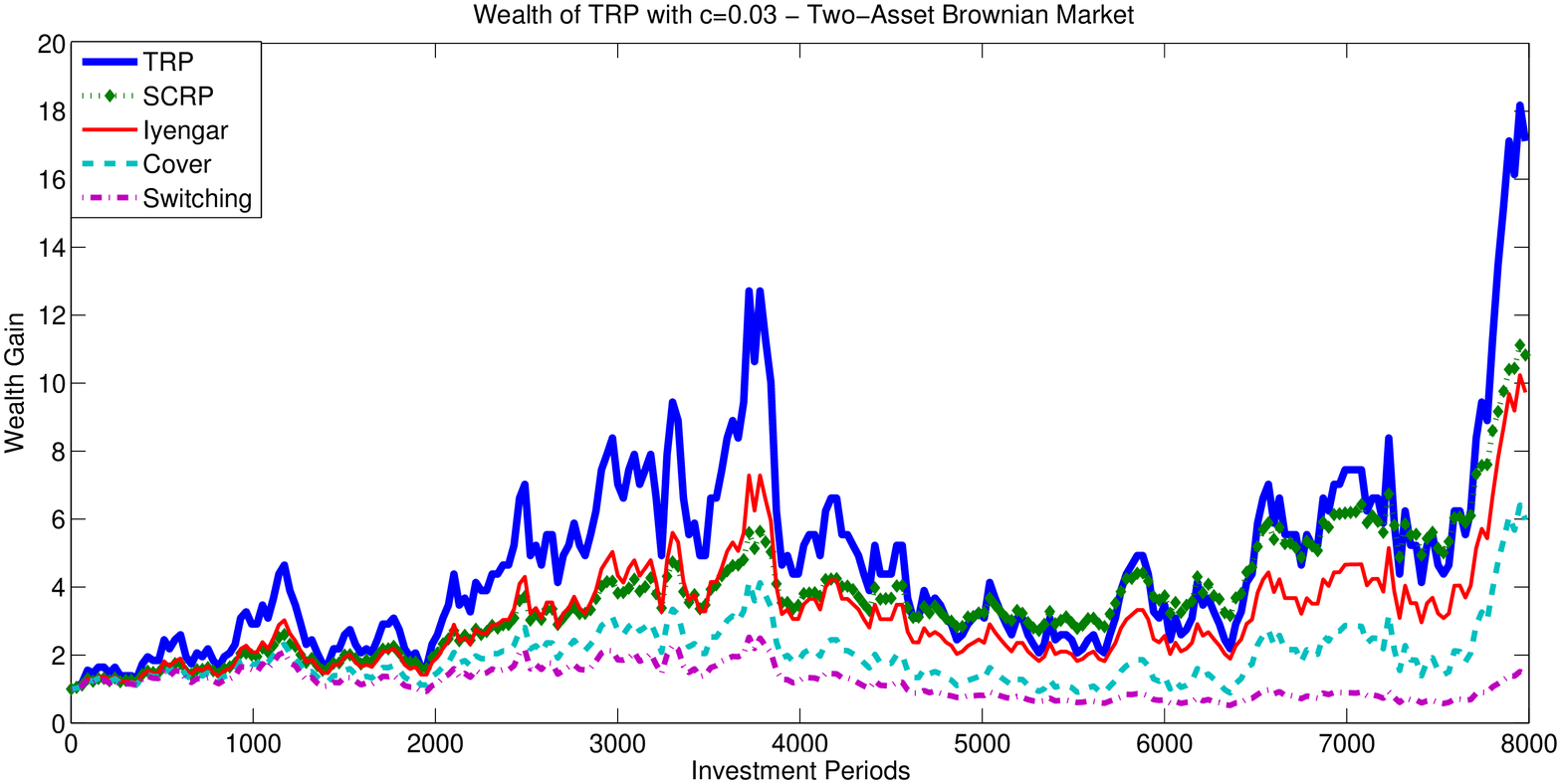}}}
\caption{Performance of portfolio investment strategies in the two-asset Brownian market. (a) Wealth gain with the cost ratio $c=0.01$. (b) Wealth gain with the cost ratio $c=0.03$.}
\label{fig:brownian}
\end{figure}

As the first scenario, we apply TRPs to a discrete-time two-asset
Brownian market. Under this well studied market in the financial
literature \cite{investment}, the price relative vector is given by
$\vec{X}=\left[X_1\; X_2\right]^T$, where $X_1=1$, $X_2=e^{kZ}$ and
$Z=\pm1$ with equal probabilities and we set $k=0.03$
\cite{iyengargo}. Here, the sample spaces of the price relative
sequences $X_1$ and $X_2$ are $\cX_1 = \left\{1\right\}$ and $\cX_2 =
\left\{0.97,1.03\right\}$, respectively, and $\cX = \cX_1 \cup \cX_2 =
\left\{x_1,x_2,x_3\right\}$, where $x_1=1$, $x_2=0.97$,
$x_3=1.03$. Hence, the probability mass vectors of the price relative
sequences $X_1$ and $X_2$ are given by $\bp_1=\left[1\;0\;0\right]^T$
and $\bp_2=\left[0\;0.5\;0.5\right]^T$, respectively. Based on this
data, we evaluate the growth rate for different $(b,\eps)$ pairs to
find the best TRP that maximizes the growth rate using the approach
introduced in Section~\ref{subsec:brownian}, i.e., we form the matrix
$\bQ$ and evaluate the corresponding maximum eigenvalues to find the
pair that achieves the largest maximum eigenvalue since this pair also
maximizes the growth rate. Then, we invest 1 dollars in a randomly
generated two-asset Brownian market using: the TRP, labeled as,
``TRP'', i.e., TRP($b$,$\eps$) with calculated $(b,\eps)$ pair, the
SCRP algorithm with the target portfolio vector
$\vb=\left[0.5\;0.5\right]$, labeled as ``SCRP'', as suggested in
\cite{KoSi11}, the Iyengar's algorithm, labeled as ``Iyengar'', the
Cover's algorithm, labeled as ``Cover'', and the switching portfolio,
labeled as ``Switching'', with parameters suggested in \cite{port}. In
Fig.~\ref{fig:brownian}, we plot the wealth achieved by each algorithm
for transaction costs $c=0.01$ and $c=0.03$, where $c$ is the
proportion paid when rebalancing, i.e, $c=0.03$ is a $3\%$
commission. As expected from the derivations in Section III, we
observe that, in both cases, the performance of the TRP algorithm is
significantly better than the other algorithms under transaction
costs.
\begin{figure}[t]
\centering
    \subfloat[]{\label{fig:average-a}\centerline{\epsfxsize=9cm \epsfbox{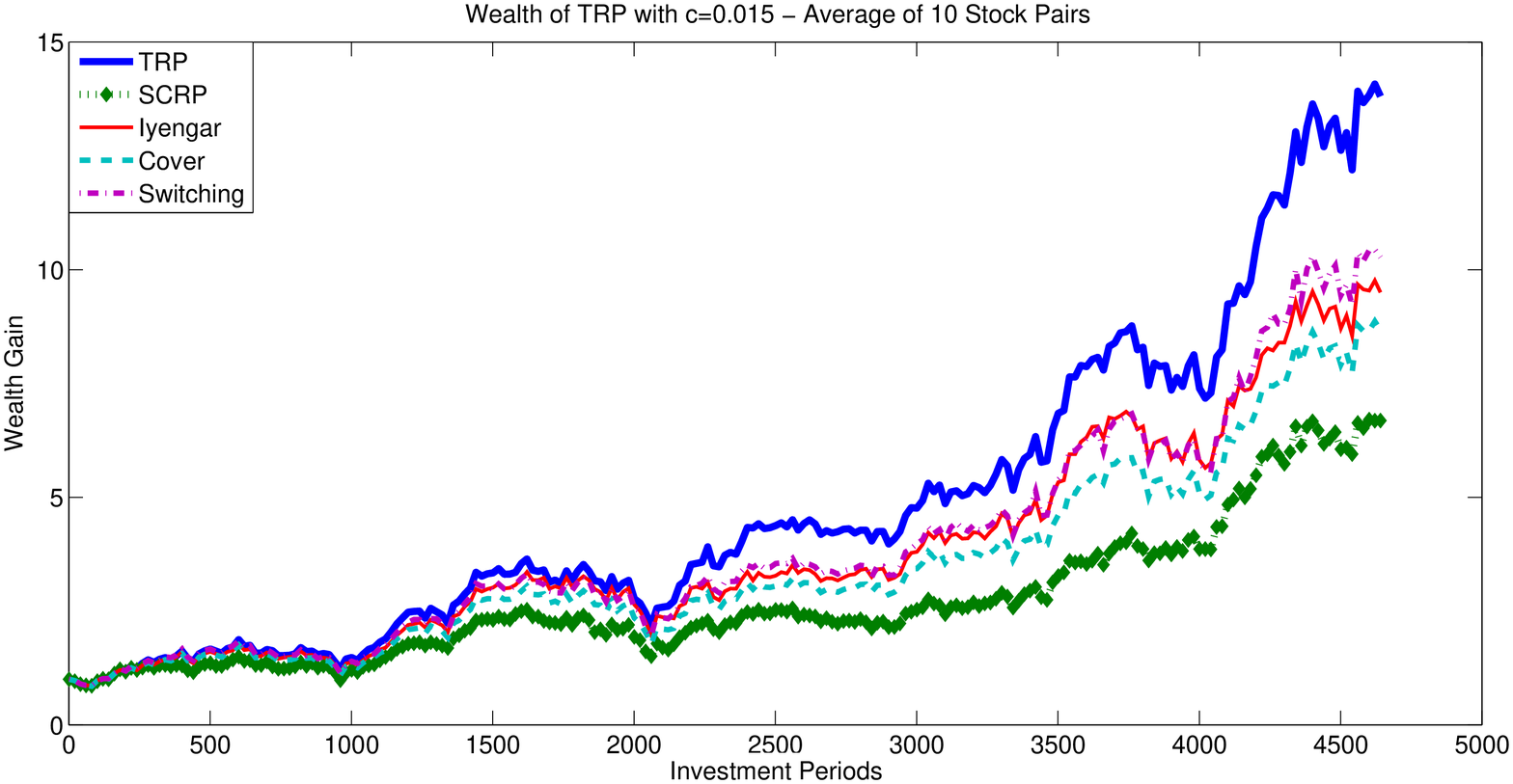}}}\\
    \subfloat[]{\label{fig:average-b}\centerline{\epsfxsize=9cm \epsfbox{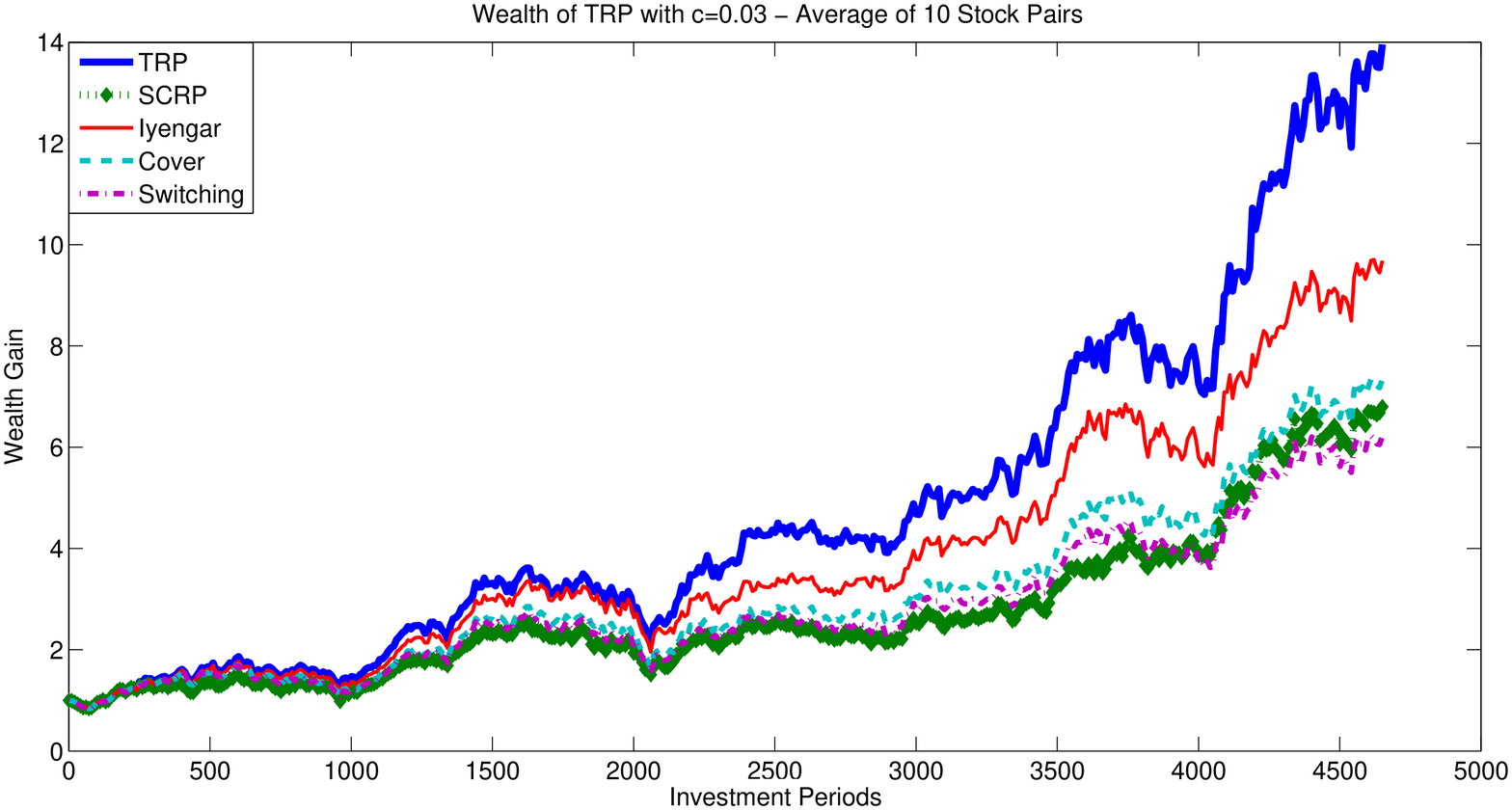}}}
\caption{Average performance of portfolio investment strategies on  independent stock pairs. (a) Wealth gain with the cost ratio $c=0.015$. (b) Wealth gain with the cost ratio $c=0.03$.}
\label{fig:average}
\end{figure}

We next present results that illustrate the average performance of TRPs 
on a number of stock pairs from the historical data sets \cite{cover1991} to 
avoid any bias to particular stock pairs. In this set of simulations, we first randomly
select pairs of stocks from the historical data that includes 34
stocks (where the Kin Ark stock is excluded). The data includes the price 
relative sequences of 34 stock pairs for 5651 investment periods (days). 
Since the brute force algorithm introduced in Section~\ref{subsec:algorithm} requires
the sample spaces of the price relative sequences, for each randomly
selected stock pair, we proceed as
follows. We first calculate the sample spaces and the probability mass
vectors of the price relative sequences from the first 1000-day
realizations of $X_1$ and $X_2$, where the sample spaces are simply
constructed by quantizing the observed realizations into bins. We
observed that the performance of the TRP is not effected by the number
of bins provided that there are an adequate number of bins to
approximate the continuous valued price relatives. Then, we optimize
$b$ and $\eps$ using the MLE introduced in Section IV and the brute
force algorithm from Section III, and invest using this TRP for the
next 1000 periods, i.e., from period 1001 to period 2000. We then
update $(b,\eps)$ pair using the first $2000$-day realizations of the
price relative vectors and invest using the best TRP for the next 1000
periods. We repeat this process through all available data. Hence, we
invest on each stock pairs using TRP for 4651 periods where we update
$(b,\eps)$ pair at each 1000 periods. In Fig.~\ref{fig:average-a} and 
Fig.~\ref{fig:average-b}, we
present the average performances of the TRP
algorithm, the SCRP algorithm, the Cover's algorithm, the Iyengar's
algorithm and the switching portfolio, under a mild transaction cost
$c=0.015$ and a hefty transaction cost $c=0.03$, respectively. In
Fig.~\ref{fig:average}, we present the wealth gain for each algorithm,
where the results are averaged over randomly selected 10 independent
stock pairs. We observe that although
the performance of the algorithms other than the TRP degrade with
increasing transaction cost, the performance of the TRP, using the
MLE, is not significantly effected since it can avoid excessive
rebalancings. We further observe from these simulations that the average
performance of the TRP is better than the average performance of the
other portfolio investment strategies commonly used in the
literature.
\vspace{-0.2in}

\section{Conclusion} \label{sec:con}
We studied growth optimal investment in i.i.d. discrete-time markets
under proportional transaction costs. Under this market model,
we studied threshold portfolios that are shown to yield the optimal
growth. We first introduced a recursive update to calculate the
expected growth for a two-asset market and then extend our results to
markets having more than two assets. We next demonstrated that under the threshold
rebalancing framework, the achievable set of portfolios form an
irreducible Markov chain under mild technical conditions. We evaluated
the corresponding stationary distribution of this Markov chain, which
provides a natural and efficient method to calculate the cumulative
expected wealth. Subsequently, the corresponding parameters are
optimized using a brute force approach yielding the growth optimal
investment portfolio under proportional transaction costs in
i.i.d. discrete-time two-asset markets. We also solved the optimal
portfolio selection in discrete-time markets constructed by sampling
continuous-time Brownian markets. For the case that the underlying
discrete distributions of the price relative vectors are unknown, we
provide a maximum likelihood estimator. We observed in our
simulations, which include simulations using the historical data sets
from \cite{cover1991}, that the introduced TRP algorithm significantly improves
the achieved wealth under both mild and hefty transaction costs as
predicted from our derivations.  \vspace{-0.2in}

\appendix

\noindent
{\bf A)} { Proof of Lemma~\ref{lem:num_of_states}:} We analyze the
cardinality of the set $\cB_N$ of achievable portfolios at period $N$,
$M_N$, as follows. If we assume that an investor invests with a
TRP($b$,$\eps$) for $N$ investment periods and the sequence of price
relative vectors are given by $\left\{
\left[X_1(n)\;X_2(n)\right]=\left[X_1^{(n)}\;X_2^{(n)}\right]
\right\}_{n=1}^N$ and the portfolio sequence is given by
$\left\{b(n)=b_{n}\right\}_{n=1}^N$, then we see that the portfolio
could leave the interval at any period depending on the realizations
of the price relative vector. We define an $N$-period market scenario
as a sequence of portfolios $\left\{b(n)\right\}_{n=1}^{N}$. We can
find the number of achievable portfolios at period $N$ as the number
of different values that the last element of $N$-period market
scenarios can take. Here, we partition the set of $N$-period market
scenarios according to the last time the portfolio leaves the interval
$(b-\eps,b+\eps)$ and show that any achievable portfolio at period $N$
can be achieved by an $N$-period market scenario where the portfolio
does no leave the interval $(b-\eps,b+\eps)$ for $N$ periods as
follows. If we define the set $\cP$ as the set of $N$-period market
scenarios, i.e.,
\begin{align*}
\cP  =\left\{\{b_{n}\}_{n=1}^N \;\vert\; b_{n}\in \cB_n\;,n=1,\ldots,N\right\}= \bigcup_{i=1}^{N+1} \cP_i,
\end{align*} where $\cP_i$ is the set of $N$-period market scenarios where the portfolio leaves the interval $(b-\eps,b+\eps)$ last time at period $i$, $i=1,\ldots,N$, and $\cP_{N+1}$ is the set of $N$-period market scenarios where the portfolio does not leave the interval $(b-\eps,b+\eps)$ for $N$ investment periods. We point out that $\cP_i$'s are disjoint, i.e., $\cP_i \cap \cP_j = \varnothing$ for $i\not=j$ and their union gives the set of all $N$-period market scenarios, i.e., $\bigcup_{i=1}^{N+1} \cP_i = \cP$ so that they form a partition for $\cP$. We see that the set $\cB_N$ of achievable portfolios at period $N$ is the set of last elements of $N$-period market scenarios, i.e., $\cB_N = \{b_N\; \vert\;\{b_n\}_{n=1}^N \in \cP\}$. We next show that the last element of any $N$-period market scenario from $\cP_i$ for $i=1,\ldots,N$ is also a last element of an $N$-period market scenario from $\cP_{N+1}$. Therefore, we demonstrate that any element of the set $\cB_N$ is achievable by a market scenario from $\cP_{N+1}$ and $\cB_N = \{b_N\; \vert\;\{b_n\}_{n=1}^N \in \cP_{N+1}\}$.

Assume that $\left\{b_{n}\right\}_{n=1}^N\in \cP_i$ for some $i\in\{1,\ldots,N\}$ so that $b_{i}=b$, i.e., the portfolio is rebalanced to $b$ last time at period $i$. Note that $b_{N}$ can also be achieved by an $N$-period market scenario $\left\{b^{'}_n\right\}_{n=1}^N$ where the portfolio never leaves the interval $(b-\eps,b+\eps)$, $b^{'}_j=b_{i+j}$ for $j=1,\ldots,N-i$ and $X_1^{(j)}=X_2^{(j)}$ for $j=N-i+1,\ldots,N$ so that $b^{'}_N=b^{'}_{N-i}=b_N$. Hence, it follows that the set of achievable portfolios at period $N$ is the set of achievable portfolios by $N$-period market scenarios from $\cP_{N+1}$. We next find the number of different values that $b(N)$ can take where the portfolio does not leave the interval $(b-\eps,b+\eps)$ for $N$ investment periods.

When the portfolio never leaves the interval $(b-\eps,b+\eps)$ for $N$ investment periods, $b(N)$ is given by
$
b(N) = b\prod_{i=1}^{N}X_1(n)/ \left(b\prod_{i=1}^{N}X_1(n)+(1-b)\prod_{i=1}^{N}X_2(n)\right)$. If we write the reciprocal of $b(N)$ as
$
\frac{1}{b(N)} = 1+\frac{1-b}{b}\prod_{n=1}^{N}\frac{X_2(n)}{X_1(n)}= 1+\frac{1-b}{b}e^{\sum_{n=1}^{N} Z(n)},
$ then we observe that the number of different values that the portfolio $b(N)$ can take is the same as the number of different values that the sum $\sum_{n=1}^{N} Z(n)$ can take. Since the price relative sequences $X_1(n)$ and $X_2(n)$ are elements of the same sample space $\cX$ with $\vert\cX\vert=K$, it follows that $\vert \cZ \vert = M \leq K^2-K+1$. Since the number of different values that the sum $\sum_{n=1}^{N} Z(n)$ can take is equal to $\binom{N+M-1}{M-1}$ and $M\leq K^2-K+1$, it follows that the number of achievable portfolios at period $N$ is bounded by $\binom{N+K^2-K}{K^2-K}$, i.e., $\vert \cB_N \vert = M_N \leq \binom{N+K^2-K}{K^2-K}$ and the proof follows. \\
\noindent
{\bf B)} { Proof of Lemma~\ref{lem:sumofZs}:}If the portfolio does not leave the interval $(b-\eps,b+\eps)$ for $N$ investment periods, then $b(n)\in(b-\eps,b+\eps)$ for $n=1,\ldots,N$ and it is not adjusted to $b$ at these periods so that
$
b(n) = \frac{b\prod_{i=1}^n X_1(i)}{b\prod_{i=1}^n X_1(i)+(1-b)\prod_{i=1}^n X_2(i)} \in (b-\eps,b+\eps)
$
for each $n=1,\ldots,N$. Taking the reciprocal of $b(n)$, we get that
$
\frac{b(1-b-\eps)}{(1-b)(b+\eps)}< \prod_{i=1}^n \frac{X_2(i)}{X_1(i)} < \frac{b(1-b+\eps)}{(1-b)(b-\eps)}.
$
Noting that $\frac{X_2(i)}{X_1(i)}=e^{Z(i)}$ and taking the logarithm of each side, it follows that
$
\ln\frac{b(1-b-\eps)}{(1-b)(b+\eps)}=\alpha_2< \sum_{i=1}^n Z(i) < \ln\frac{b(1-b+\eps)}{(1-b)(b-\eps)}=\alpha_1,
$
i.e., $\sum_{i=1}^n Z(i) \in(\alpha_2,\alpha_1)$ for $n=1,\ldots,N$. Now, if the portfolio leaves the interval $(b-\eps,b+\eps)$ first time at period $k$ for some $k\in\{1,\ldots,N\}$, then we get that $b(k)\geq b+\eps$ or $b(k) \leq b-\eps$
%, i.e.,
%\begin{equation*}
%\frac{b\prod_{i=1}^k X_1(i)}{b\prod_{i=1}^k X_1(i)+(1-b)\prod_{i=1}^k X_2(i)} \geq b+\eps
%\end{equation*}
%or
%\begin{equation*}
%\frac{b\prod_{i=1}^k X_1(i)}{b\prod_{i=1}^k X_1(i)+(1-b)\prod_{i=1}^k X_2(i)} \leq b-\eps
%\end{equation*}
so that we get
$
\sum_{i=1}^k Z(i)\geq \alpha_1$ or $ \sum_{i=1}^k Z(i)\leq \alpha_2,
$
i.e., $\sum_{i=1}^k Z(i)\not \in (\alpha_2,\alpha_1)$. \vspace{-0.17in}

\bibliographystyle{IEEEbib}
{
\def\ninept{\def\baselinestretch{0.8}}
\ninept
\bibliography{msaf_references}}

\end{document}